\newcommand{\ignore}[1]{{}}
\def\p1{\phantom{1}}
\newtheorem{theorem}{Theorem}
\newtheorem{corollary}{Corollary}
\newtheorem{lemma}{Lemma}
\newtheorem{fact}{Fact}
\DeclareMathOperator{\EX}{\mathbb{E}}% expected value
\DeclareMathOperator*{\argmax}{arg\,max}
\newcommand{\full}[1]{{}}
\begin{document}

\begin{titlepage}

\title{Online $b$-Matching with Stochastic Rewards
}
\author{Susanne Albers\thanks{Department of Computer Science, Technical University of Munich. {\tt albers@in.tum.de}} \and Sebastian Schubert\thanks{Department of Computer Science, Technical University of Munich. {\tt sebastian.schubert@tum.de}}}
\maketitle

\thispagestyle{empty}

\begin{abstract}
The $b$-matching problem is an allocation problem where the vertices on the
left-hand side of a bipartite graph, referred to as servers, may be matched
multiple times. In the setting with stochastic rewards, an assignment between
an incoming request and a server turns into a match with a given success
probability. Mehta and Panigrahi (FOCS 2012) introduced online bipartite 
matching with stochastic rewards, where each vertex may be matched once.
The framework is equally interesting in graphs with vertex capacities.
In Internet advertising, for instance, the advertisers seek successful
matches with a large number of users. 

We develop (tight) upper and lower bounds on the competitive ratio of 
deterministic and randomized online algorithms, for $b$-matching with
stochastic rewards. Our bounds hold for both offline benchmarks considered 
in the literature. As in prior work, we first consider vanishing 
probabilities. We show that no randomized online algorithm can achieve 
a competitive ratio greater than $1-1/e\approx 0.632$, even for 
identical vanishing probabilities and arbitrary uniform server capacities.
Furthermore, we conduct a primal-dual analysis of the deterministic \textsc{StochasticBalance} algorithm. We prove that it achieves a competitive
ratio of $1-1/e$, as server capacities increase, for arbitrary heterogeneous 
non-vanishing edge probabilities. This performance guarantee 
holds in a general setting where servers have individual capacities and for 
the vertex-weighted problem extension. To the best of our knowledge, this
is the first result for \textsc{StochasticBalance} with arbitrary non-vanishing
probabilities.

We remark that our impossibility result implies in particular that, for the 
AdWords problem, no online algorithm can be better than $(1-1/e)$-competitive 
in the setting with stochastic rewards. 
\end{abstract}

\end{titlepage}

\section{Introduction}
Online bipartite matching and the generalized AdWords problem have received 
tremendous research interest over the last three decades, see e.g.~\cite{AGKM,BM,BJN,BNW,GU,HT23,HZ,HZZ,KVV,M13,MP,MSVV,U23}. 
In this realm Mehta and Panigrahi~\cite{MP} introduced the online matching problem with stochastic rewards, which focuses on \emph{successful}
allocations. Given a bipartite graph, the left-hand side vertices are known in advance. The right-hand side vertices 
arrive one by one. In each step, an incoming vertex may be assigned to an available neighbor. Then, with a certain 
probability, the assignment turns into an actual match. Mehta and Panigrahi motivate their study by the fact that
in Internet advertising, an advertiser only pays if a user clicks on the assigned ad (pay-per-click) and click-through
rates are known. 

Further and recent work on matchings with stochastic rewards includes~\cite{GNR,GU,HJSS,HZ,MWZ,U24}. Almost all the contributions 
address the uncapacitated setting where each vertex may be matched once. However, matchings with stochastic rewards are also 
highly relevant in capacitated environments. In fact, in the AdWords problem, the left-hand side vertices of the bipartite graph 
represent advertisers with budgets who wish to show ads to a large number of users. Furthermore, the left-hand side vertices 
could be service providers that offer support in terms of telecommunications, data storage or job processing to a large
number of clients. A client with a service request may accept an offer with a certain probability. Alternatively, the left-hand
side vertices could be online retailers that sell large quantities of products. A potential customer buys a recommended product 
with a certain chance. 

\vspace*{0.1cm}

{\bf Problem definition:} In this paper we study online $b$-matching with stochastic rewards. The $b$-matching problem is a 
well-known capacitated allocation framework. In particular, it models interesting special cases of the AdWords problem. Formally, 
we are given a bipartite graph $G=(S\cup R,E)$. The vertices of $S$ are servers. Each server $s\in S$ has a capacity of $b$,
indicating that may it be matched up to $b$ times. The vertices of $R$ are requests that have to be assigned to the servers. 
The set $S$ of servers is known in advance. The requests of $R$ arrive online, one by one. Whenever a new request 
$r\in R$ arrives, its incident edges are revealed. Each such edge $\{s,r\}$ has a success probability $p_{s,r}$ that is 
revealed as well. An algorithm has to assign the request immediately and irrevocably to an eligible server with remaining 
capacity, provided that there is one. If an assignment is made, request $r$ accepts it with the probability of the corresponding 
edge, i.e., the assignment turns into a successful match. The outcome of the random choice is independent of past ones. 
If $r$ does not accept, it leaves the system and the remaining capacity of the proposed server is unchanged. The goal is to 
maximize the expected number of successful matches. Throughout this paper we will use the term \emph{assignment} to refer
to the decision of an algorithm. The terms \emph{success} or \emph{match} are used for assignments that succeeded. 

\vspace*{0.1cm}

{\bf Benchmarks and competitiveness:} We evaluate the performance of online algorithms using competitive analysis. 
For online matching with stochastic rewards, addressing the uncapacitated variant with $b=1$, there has been some discussion
in the literature~\cite{GU,MP} on how to quantify the value of an optimal solution.  When Mehta and Panigrahi~\cite{MP} introduced the problem, 
they compared their algorithms against the offline and non-stochastic optimum. More specifically, 
they consider the optimal solution to the problem where the entire graph is known in advance and the reward of adding
an edge $\{s, r\}$ to the matching is (deterministically) $p_{s,r}$. The reward that can be accrued per server is upper 
bounded by $1$. The problem of maximizing total reward is known as the budgeted allocation problem. Mehta and Panigrahi
show that the expected number of matches of any online algorithm for the matching problem with stochastic rewards is 
upper bounded by the value of a (fractional) optimal solution to the budgeted allocation problem. This fact immediately 
carries over to the capacitated problem, assuming that the accrued reward per server is upper bounded by its capacity $b$.
Hence the fractional optimum to the budgeted allocation problem is a suitable benchmark for evaluating an online
algorithm. In the following, we will refer to it as the \emph{non-stochastic benchmark}.

The other benchmark that is used in the literature is referred to as the 
\emph{clairvoyant} or simply \emph{stochastic benchmark}. It corresponds to an 
offline algorithm assigning requests. More specifically, the benchmark is 
defined as the expected number of matches produced by an optimal algorithm that 
knows the entire graph including the edge probabilities in advance. The 
algorithm has to assign requests according to their arrival order and is only 
informed after an assignment whether it succeeded or failed. Golrezaei et al.~\cite{GNR} showed that, for any graph, the stochastic benchmark is upper 
bounded by the non-stochastic benchmark. Hence the stochastic benchmark 
potentially admits better competitive ratios.

Given an input graph $G$, let $\mathbb{E}[\textsc{Alg}(G)$] denote the expected 
number of matches of an online algorithm {\sc Alg} for the online $b$-matching 
problem with stochastic rewards. Let $\textsc{Opt}(G)$ and $\textsc{SOpt}(G)$ be 
the value of the non-stochastic and stochastic benchmark on $G$, respectively. 
Algorithm $\textsc{Alg}$ is $c$-competitive against the non-stochastic benchmark 
if ${\mathbb{E}[\textsc{Alg}(G)]}\geq c\cdot {\textsc{Opt}(G)}$ holds, for all $G$. 
Analogously, {\sc Alg} is $c$-competitive against the 
stochastic benchmark if
${\mathbb{E}[\textsc{Alg}(G)]}\geq c\cdot {\textsc{SOpt}(G)}$, for all $G$.

\subsection{Related Work}
First, we review a few results on online matching \emph{without} stochastic 
rewards, i.e.\ an assignment is a match (with probability~1). Online bipartite 
matching was introduced in a seminal paper by Karp et al.~\cite{KVV}. Each vertex may 
be matched once. Karp et al.\ showed that the best competitive ratio of deterministic 
online algorithms is equal to $1/2$. Furthermore, they proposed a randomized 
{\sc Ranking} algorithm that achieves an optimal competitiveness of 
$1-1/e \approx 0.632$, see~\cite{BM,DJK,KVV}. Online $b$-matching was studied
in~\cite{KP}, assuming that all servers have a uniform capacity of $b$. 
The best competitive ratio of deterministic online algorithms is equal to
$1-1/(1+1/b)^b$, which tends to $1-1/e$ as $b\rightarrow \infty$~\cite{KP}. 

Online matching with stochastic rewards was defined by Mehta and Panigrahi~\cite{MP}.
Again, in the original problem each vertex may be matched only once, i.e.\ $b=1$.
Mehta and Panigrahi consider equal success probabilities $p_{s,r}=p$,
for all edges $\{s,r\} \in E$. All their 
results are in comparison with \textsc{Opt}. 
Mehta and Panigrahi present a simple \textsc{StochasticBalance} algorithm that assigns incoming requests to an eligible neighbor that is assigned the least amount of requests so far. 
They show that its competitive ratio is at least $0.567$ and at most $0.588$ for vanishing probabilities, meaning that $p \rightarrow 0$. Moreover, they analyze \textsc{Ranking} and show that it achieves 
a competitive ratio of at least $0.534$, for equal non-vanishing probabilities. Finally, they show that no (randomized) algorithm has a competitive ratio greater than $0.621 < 1-1/e$. 
This upper bound has recently been improved to $0.597$ by leveraging reinforcement learning \cite{ZSZJD}.

Huang and Zhang~\cite{HZ} were the first to successfully 
conduct a primal-dual 
analysis of the problem. They improve the competitiveness of \textsc{StochasticBalance} against \textsc{Opt} to $0.576$ if all edges have 
equal and vanishing probabilities. Furthermore, they also show that a generalized version of \textsc{StochasticBalance} is at least $0.572$-competitive if all 
edges have unequal vanishing probabilities. 
Concurrently, Goyal and Udwani \cite{GU} analyzed the performance of a randomized \textsc{PerturbedGreedy} algorithm against the stochastic benchmark\footnote{In fact, Goyal and Udwani consider an even stronger benchmark than \textsc{SOpt}, where the offline algorithm is allowed to choose the arrival order of the requests.} \textsc{SOpt}, for the more general vertex-weighted problem.
They show a competitive ratio of $1-1/e$ if the edge probabilities are \emph{decomposable}, i.e., they
are the product of two factors, one for each of the two vertices
of an edge. 
For unequal but vanishing probabilities, they give an algorithm that is $0.596$-competitive. In the most recent work, Huang et al. \cite{HJSS} improve the competitiveness of \textsc{Ranking} against \textsc{Opt} to $0.572$, if 
all probabilities are equal and vanishing. They further show that \textsc{StochasticBalance} is $0.613$-competitive against \textsc{SOpt} in the case of equal and vanishing probabilities. Their result slightly worsens to $0.611$ if 
the probabilities are unequal, but still vanishing. 

We remark that the lower bounds, i.e.\ the competitive ratios of the various algorithms mentioned in the last two paragraphs also hold for general
$b$-matching with stochastic rewards. This follows from a standard 
vertex-splitting argument: Replace each server of capacity $b$ by $b$ vertices
that may be matched only once. On the resulting graph, execute the corresponding
algorithm. However, the upper bounds cannot immediately be extended to other 
values of $b$. 
The only prior work that covers online $b$-matching with stochastic rewards is
by Golrezaei et al.~\cite{GNR}. The authors study more general personalized assortments optimization in management science. Their \textsc{InventoryBalance} algorithm achieves a competitive ratio of $1-1/e$ against \textsc{Opt} 
for arbitrary edge probabilities, if $b\rightarrow \infty$. They also show that this is tight, by giving an upper bound 
of $1-1/e$ if all edges have probability $1$. For  $b=1$, their algorithm reduces 
to a greedy algorithm that is $1/2$-competitive. 

We finally mention the Adwords problem~\cite{MSVV}, which has been studied without stochastic rewards. There is a set of advertisers, each with a daily budget, who wish to link their ads to search keywords and issue respective bids. Queries along with 
their keywords arrive online and must be allocated to the advertisers. The
optimal competitiveness is $1-1/e$, under the small-bids assumption~\cite{BJN,MSVV}. 
The $b$-matching problem models the basic setting where each advertiser issues
bids of value~0 or 1. 

\subsection{Our Contribution}
In this paper we develop tight upper and lower bounds on the competitive ratio
of deterministic and randomized online algorithms for $b$-matching with stochastic
rewards. Our bounds hold for both benchmarks \textsc{Opt} and \textsc{SOpt}.
Our study is specifically motivated by the following question:
Is it possible to beat the barrier of $1-1/e$, for competitiveness,
in the (easiest) setting of equal and vanishing probabilities when 
comparing against the stochastic benchmark? Observe that, for $b=1$ and vanishing probabilities,
deterministic online algorithms achieve competitive ratios greater than
$1/2$, which is the best bound in the framework without stochastic 
rewards. The question was also raised by Goyal and Udwani \cite{GU}. Vanishing probabilities are sensible
in various applications. In Internet advertising, a user clicks on an assigned ad with low probability.

In Section~\ref{sec:ub} we develop an upper bound. Here we consider vanishing
probabilities, as almost all prior work. We prove that no
randomized online algorithm can achieve a competitive ratio greater than
$1-1/e$ against the stochastic benchmark \textsc{SOpt}, for online
$b$-matching with stochastic rewards. This holds even for equal vanishing
edge probabilities and for all values of $b$. 
To the best of our knowledge, this is the first hardness 
result for the stochastic benchmark in the literature. We immediately
obtain the same upper bound against the non-stochastic benchmark \textsc{Opt}
because, as stated above, for any graph the value of \textsc{Opt} is at least
as large as that of \textsc{SOpt}~\cite{GNR}. In conclusion, surprisingly, it is impossible
to break the barrier of $1-1/e$, which is a recurring performance guarantee in
matching problems. In particular for Adwords, a prominent application of 
matchings with stochastic rewards, no further improvement is possible. 

Technically, in our upper bound construction, we define a family of graphs,
for general $n=|S|$ and $b$. It generalizes graphs used by Mehta and
Panigrahi~\cite{MP} to prove their upper bound of $0.621 < 1-1/e$. 
For our problem, the mathematical analysis is much more involved. A key problem is to 
estimate the value of the stochastic benchmark \textsc{SOpt}.  
An offline algorithm is hindered
by the fact that it does not know in advance which edges will be successful
and that it has to serve requests in the order of arrival. We resolve
this issue by considering a \textsc{Greedy} strategy, for the graph family. 

In Section~\ref{sec:bal} we present a constructive, algorithmic result. We analyze 
a generalization of the deterministic 
\textsc{StochasticBalance} algorithm, for online $b$-matching with stochastic rewards. 
Specifically, we perform a primal-dual analysis, considering the harder
benchmark \textsc{Opt}. In order to ease the technical exposition, we first 
assume that all servers have a uniform capacity of $b$. We prove that 
\textsc{StochasticBalance} achieves a competitive ratio of $1-1/e$ with respect
to \textsc{Opt}, as $b\rightarrow \infty$. This bound holds for arbitrary
individual, non-vanishing edge probabilities. The performance guarantee immediately 
carries over to the easier stochastic benchmark \textsc{SOpt}. Finally,
we show how to extend the result to more general settings: 
(1)~Each server $s\in S$ has an individual capacity $b_s$. (2) Each server 
$s\in S$ has a weight $w_s$ and any successful match incident to $s$ 
has a weight/value of $w_s$. The goal is to maximize the expected total 
weight of successful matches. \textsc{StochasticBalance} remains $(1-1/e)$-competitive
as the minimum server capacity increases, i.e., $\min_{s\in S} b_s \rightarrow \infty$. 
We remark that, due to the complexity of the primal-dual analysis, we present bounds as $b$ or $\min_{s\in S} b_s$ tend to infinity. Already for $b=1$, the analysis of \textsc{StochasticBalance}
involves solving integral equations using numerical methods~\cite{HJSS,HZ}.

Our contribution for \textsc{StochasticBalance} is
one of the few existing results that hold for arbitrary non-vanishing probabilities. 
\textsc{StochasticBalance} was analyzed for identical non-vanishing probabilities, the best ratio
of $0.567$ being achieved as $p\rightarrow 0$~\cite{MP}.
The randomized algorithms {\sc Ranking} and {\sc PerturbedGreedy}
attain competitive ratios of $0.572$ and $1-1/e$ for equal or decomposable,
non-vanishing probabilities~\cite{GU,HJSS}. 
Finally, in comparison, \textsc{StochasticBalance} appears to be a more favourable 
algorithm than \textsc{InventoryBalance}, for online $b$-matching with stochastic rewards. 
It is a simpler, more intuitive and well studied matching algorithm~\cite{HJSS,HZ,MP}, 
attaining competitive ratios greater than $1/2$, for $b=1$.

\section{Upper Bound of $1-1/e$ for Vanishing Probabilities} 
\label{sec:ub}
We start by giving an upper bound on the competitiveness of any (randomized) algorithm for the online $b$-matching problem with 
stochastic rewards. Note that for $p=1$, both benchmarks are identical, and it is known \cite{MSVV} that no 
algorithm achieves a competitive ratio greater than $1-1/e$. However, 
it was unknown if one could improve upon this in the case of smaller edge probabilities or when using the stochastic benchmark. 
We will show that this is impossible. To the best of our knowledge, we develop the first hardness result with respect to the stochastic benchmark.
As mentioned before, the stochastic benchmark is easier than the non-stochastic one. 
Hence we immediately obtain the same upper bound for the non-stochastic benchmark. 

We investigate the family of graphs $G_n^b$ with $n$ servers 
$S = \{s_1, s_2, \ldots, s_n \}$ of capacity $b$ and $n\cdot b/p$ requests. 
The online side is divided into $n$ rounds, each containing $b/p$ identical requests. 
All requests of round $i$, $1\leq i \leq n$, are connected to servers $\{s_i, s_{i+1}, \ldots, s_n\}$. 
The success probability of every edge is $p \rightarrow 0$. 

Note that Mehta and Panigrahi \cite{MP} use the family $G_n^1$ to show their upper bound of $0.621 < 1-1/e$ for $b=1$ against \textsc{Opt}. 
They first prove that \textsc{StochasticBalance} is an optimal online algorithm for $G_n^1$, for all $n$. 
In fact, their proof also works for all values of $b$. Afterward, they compute the exact expected number of successes generated by \textsc{StochasticBalance} 
with input $G_n^1$ for $n=1, 2$ and $3$. The competitiveness of $0.621$ against \textsc{Opt} is achieved on $G_3^1$. 

\begin{lemma}[cf., Lemma 12 of Mehta and Panigrahi \cite{MP}]
\label{lem:MP}
    \textsc{StochasticBalance} is optimal for input graph $G_n^b$, for any combination of $n$ and $b$. 
\end{lemma}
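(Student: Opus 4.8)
The plan is to run the classical \textsc{Balance}-type exchange argument (the route by which Mehta and Panigrahi obtain their Lemma~12), now carried out with general capacity $b\ge 1$.

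I would first reduce to deterministic online algorithms: $G_n^b$ has finitely many requests and every assignment yields an independent $\mathrm{Bernoulli}(p)$ outcome, so maximizing the expected number of matches is a finite-horizon Markov decision process, which admits an optimal deterministic (history-dependent) policy. Next I would isolate the only structural feature of $G_n^b$ the argument uses: the request neighborhoods form a chain under inclusion, with the neighborhood of each request contained in that of every later request — servers only become eligible over time, none ever drops out. Hence, if two servers $s$ and $s'$ are both adjacent to a request $r$, they are adjacent to every later request; and because all edge probabilities equal $p$ and failed assignments leave capacities untouched, from step $r$ on the two servers are interchangeable except for their current residual capacities.

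Now the exchange. Call a step of a deterministic algorithm $A$ \emph{unbalanced} if, on some request $r$, it assigns $r$ to a server $s$ while some server adjacent to $r$ is not yet full and has so far received strictly fewer assignments than $s$. If $A$ has no unbalanced step it is a \textsc{StochasticBalance} algorithm (all tie-breakings among least-loaded eligible servers then coincide in expected value, by the same exchange applied to a tied pair, which by the nesting stay jointly available). So it suffices to show that an $A$ with an unbalanced step can be replaced by an $A'$ with $\EX[A'(G_n^b)]\ge\EX[A(G_n^b)]$ whose \emph{earliest} unbalanced step is strictly later — a well-founded measure in $\{1,\dots,|R|,\infty\}$. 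Take $r$ to be $A$'s earliest unbalanced step (for the relevant history), let $s$ be the server it picks, and let $s'$ be a least-loaded eligible neighbor of $r$ (so $s'$ has strictly fewer assignments than $s$). Define $A'$ to agree with $A$ through step $r-1$, to assign $r$ to $s'$, and from step $r{+}1$ on to simulate $A$ with the labels $s$ and $s'$ swapped; this is legal precisely because every later request is adjacent to both and because ``$s$ full under $A$'' $\Leftrightarrow$ ``$s'$ full under $A'$''. Iterating drives $A$ to \textsc{StochasticBalance}.

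The main obstacle is showing one exchange cannot lose value. I would couple the Bernoulli outcomes of $A$ and $A'$; away from $\{s,s'\}$ the two runs are identical, so it is enough to compare the total number of successes on $s$ and $s'$ combined. The delicate point is that $s$ enters the exchange having received more assignments than $s'$ — hence, stochastically, at least as many successes and no more residual capacity — so the naive ``swap $s$ and $s'$'' coupling does not keep the two runs in lock-step, and one must also control that, should $s'$ under $A'$ fill up at a different moment than $s$ does under $A$, a different number of requests might spill onto the other servers. I would handle this by comparing the joint law of $(\#\text{successes on }s,\ \#\text{successes on }s')$: the quantity at stake is $\EX[\min(b,\mathrm{Bin}(a,p))+\min(b,\mathrm{Bin}(a',p))]$ as a function of the split $(a,a')$ of the requests that ever reach $\{s,s'\}$, and since $m\mapsto\EX[\min(b,\mathrm{Bin}(m,p))]$ is concave, moving a request from the heavier server $s$ to the lighter server $s'$ does not decrease it; after the relabelling the spill-over onto the remaining servers can be coupled to be identical, so no value is lost. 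This yields $\EX[A'(G_n^b)]\ge\EX[A(G_n^b)]$ and closes the induction. (An alternative I would consider is a backward induction showing the optimal value function is coordinatewise concave in the residual capacities — whence ``serve an eligible neighbor of maximum residual capacity'' is optimal — and then checking it agrees in value with \textsc{StochasticBalance} on $G_n^b$; but pinning down a discrete concavity invariant that survives the maximization over servers looks at least as delicate, so I would expect the exchange argument to be the cleaner path.)
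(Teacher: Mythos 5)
There is a genuine gap, and it sits at the load-bearing structural claim. You assert that in $G_n^b$ the request neighborhoods grow over time (``the neighborhood of each request contained in that of every later request --- servers only become eligible over time, none ever drops out''). This is exactly backwards: round-$i$ requests are adjacent to $\{s_i,\dots,s_n\}$, so neighborhoods \emph{shrink} and server $s_{i-1}$ drops out at the start of round $i$. Consequently your claim that ``if two servers $s$ and $s'$ are both adjacent to a request $r$, they are adjacent to every later request'' is false ($s_1$ and $s_2$ are both adjacent to round-$1$ requests, but only $s_2$ survives into round $2$), and the step where $A'$ ``simulates $A$ with the labels $s$ and $s'$ swapped'' is not legal: the two servers have different future neighborhoods, so the relabelled run may attempt to assign a later request to a server it is no longer adjacent to. The same reversed containment invalidates your parenthetical claim that tie-breaking among least-loaded eligible servers is harmless ``by the nesting.''

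A sanity check showing the argument cannot be repaired as written: the paper itself proves (Lemmas~\ref{lem:sbal} and~\ref{lem:Gre}) that \textsc{Greedy}, which always serves the smallest-index available server, extracts $(1-o(1))\cdot n$ matches from $G_n^1$ while \textsc{StochasticBalance} extracts only about $(1-1/e)\cdot n$. So \textsc{StochasticBalance} is \emph{not} optimal among label-aware deterministic policies --- which is precisely the class your MDP reduction plus exchange argument would cover; if your proof were valid it would contradict the paper's own computation. A correct proof must exploit that an \emph{online} algorithm cannot distinguish the servers $s_i,\dots,s_n$ at round $i$ (they have identical histories, and the adversary may permute their labels), reduce to algorithms that treat indistinguishable servers symmetrically, and only then invoke concavity of $m\mapsto\EX[\min(b,\mathrm{Bin}(m,p))]$ to conclude that balancing the loads is best. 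That concavity observation of yours is the right ingredient for extending the argument from $b=1$ to general $b$; note, however, that the paper supplies no proof of its own here --- it cites Lemma~12 of Mehta and Panigrahi and asserts that their (symmetrization-based) argument carries over verbatim to arbitrary $b$.
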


We will show our upper bound similarly. In Section \ref{sec:ubBAL},
we provide an upper bound for the expected number of successful matches of \textsc{StochasticBalance} on $G_n^b$. 
By Lemma \ref{lem:MP}, this 
upper bounds the performance of any online algorithm. 
Afterward, in Section \ref{sec:ubSOPT}, we quantify 
$\textsc{SOpt}\left(G_n^b\right)$ and compare it to the upper bound 
from the previous section. 
Note that is much more involved than determining $\textsc{Opt}\left(G_n^b\right)$, 
which is simply $n\cdot b$. 

\subsection{Expected Number of Matches by \textsc{StochasticBalance} on $G_n^b$}
\label{sec:ubBAL}
In this section, we develop a simple upper bound for the expected number of 
matches generated by \textsc{StochasticBalance} on $G_n^b$. 
We resort to an upper bound, since the exact expected number of matches  
becomes very difficult to determine for arbitrary values of $n$ and $b$. 
However, one can show that our upper bound becomes tight if $n \rightarrow \infty$, which we will use to derive the hardness result in the end. 

For equal edge probabilities, \textsc{StochasticBalance} simply assigns an 
incoming request $r$ to a neighbor with remaining capacity that has 
minimum \emph{load}. The load of a server $s$ is defined as
the sum of edge probabilities to requests assigned to $s$. 
Observe that \textsc{StochasticBalance} is \emph{opportunistic} in the sense 
that never leaves a request unassigned if it has a neighbor with remaining 
capacity. 

In the following lemma, we first focus on the expected number of matches 
during any round of requests. The lemma not only holds true for 
\textsc{StochasticBalance}, but any opportunistic algorithm.

\begin{lemma}
\label{lem:round}
    Let the random variable $R_i$, $1\leq i \leq n$, denote the number of matches during the assignment of the requests belonging to round $i$ in $G_n^b$. 
    Let $m$ be the number of unused capacity from servers in $\{s_i, s_{i+1}, \ldots, s_n\}$ at the start of round $i$.  It holds that $R_i \sim \min\{\text{Pois}(b), m\}$. More precisely, it holds 
    \[
        \Pr[R_i = k] = \begin{cases} \frac{b^k}{k! e^b} & k < m \,, \\
                            1-\sum_{j=0}^{m-1} \frac{b^j}{j! e^b} & k = m \,. 
                            \end{cases}
    \]
    Moreover, $\EX[R_i] \leq b$.
\end{lemma}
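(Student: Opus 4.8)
The plan is to analyse a single round as a sequence of independent coin flips that gets truncated as soon as the available capacity is used up, and then to pass to the vanishing‑probability limit. First I would observe that the round depends on the state only through $m$. All $b/p$ requests of round $i$ share the identical neighbourhood $\{s_i,\dots,s_n\}$, so an opportunistic algorithm assigns each arriving request of the round to some server of this set that still has free capacity, and it stops serving requests of the round precisely when the total free capacity among $\{s_i,\dots,s_n\}$ reaches $0$. Whether the next request can be served depends only on whether this aggregate free capacity is positive --- not on how the free slots are spread over the servers, not on the algorithm's tie‑breaking, and not on the history of rounds $1,\dots,i-1$ beyond the value of $m$. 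Hence $R_i$ is a function of $m$ and of the coin tosses alone, and it suffices to fix $m$.

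Next I would set up the coupling. Let $Y_1,\dots,Y_{b/p}$ be i.i.d.\ $\text{Bernoulli}(p)$, where $Y_j$ records whether the $j$‑th request of round $i$ would succeed if it were assigned, and write $S_j=Y_1+\dots+Y_j$ with $S_0=0$. I claim that after the first $j$ requests of the round the number of successful matches equals $\min\{S_j,m\}$, and prove this by induction on $j$. The base case $j=0$ is immediate. For the step: if $S_{j-1}<m$, then so far at most $m-1$ successes have occurred, a server of $\{s_i,\dots,s_n\}$ still has free capacity, the $j$‑th request is assigned, and the count becomes $S_{j-1}+Y_j=S_j$, which equals $\min\{S_j,m\}$ because $S_j\le S_{j-1}+1\le m$; if $S_{j-1}\ge m$, then $m$ successes have already occurred, the capacity of $\{s_i,\dots,s_n\}$ is exhausted, the $j$‑th request stays unassigned, and the count remains $m=\min\{S_j,m\}$. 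Taking $j=b/p$ gives $R_i=\min\{\text{Binomial}(b/p,p),\,m\}$ exactly.

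It then remains to take $p\to 0$ and to bound the expectation. By the Poisson limit theorem $\text{Binomial}(b/p,p)$ converges in distribution to $\text{Pois}(b)$, so $R_i$ converges in distribution to $\min\{\text{Pois}(b),\,m\}$, which is exactly the claimed law: $\Pr[R_i=k]=b^k/(k!\,e^b)$ for $0\le k<m$, with the leftover mass $1-\sum_{j=0}^{m-1}b^j/(j!\,e^b)$ placed at $k=m$. (Alternatively, in the $p\to 0$ limit one may model the $b/p$ assignments of the round directly as a Poisson process of total intensity $b$.) For the expectation, truncation can only decrease the value, so $\EX[R_i]=\EX[\min\{\text{Pois}(b),m\}]\le\EX[\text{Pois}(b)]=b$; equivalently, already for fixed $p$ one has $\EX[R_i]\le\EX[\text{Binomial}(b/p,p)]=b$.

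The one genuinely load‑bearing point is the reduction in the first two paragraphs: one has to be sure that it is exactly the aggregate free capacity $m$, and neither its distribution over the individual servers nor the identity of the opportunistic algorithm, that governs the round, so that the clean coupling is legitimate. After that the argument is routine --- the Poisson limit is classical and the expectation bound is a one‑line monotonicity observation. A small presentational choice is whether to state the coupling for fixed $p$ and then let $p\to 0$, or to Poissonise from the outset; I would prefer the former, as it makes the truncation structure most transparent.
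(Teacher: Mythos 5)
Your proposal is correct and follows essentially the same route as the paper's proof: model the round as $b/p$ independent Bernoulli$(p)$ tosses truncated at $m$ heads, so that $R_i=\min\{\mathrm{Bin}(b/p,p),m\}$, then apply the Poisson limit theorem and bound $\EX[R_i]$ by the untruncated mean $b$. Your explicit induction establishing the coupling is a more detailed writeup of the same reduction the paper states informally.
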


\begin{proof}
    We look at the underlying probability space from the perspective of round $i$. 
    First, recall that there are $b/p$ requests in round $i$, which are all incident to the servers $\{s_i, s_{i+1}, \ldots, s_n\}$. 
    Any opportunistic algorithm assigns each request of round $i$ to one of these servers, as long as there is capacity left. 
    As all edges have the same success probability $p$, one can think of these assignments as a sequence of independent coin tosses,
    where landing on heads means that the assignment succeeds.
    More precisely, we toss a coin that has probability $p$ of landing on heads until there are either $m$ heads or $b/p$ 
    tosses in total. We can simplify this by always tossing the coin $b/p$ times, where we stop counting once we see $m$ heads. 
    Let $R'_i$ be the actual number of heads in the $b/p$ coin tosses. Note that $R_i$ - the number of counted heads - 
    is equal to $\min\{R'_i, m\}$. It follows that $\Pr[R_i = k] = \Pr[R'_i = k]$ if $k < m$ and $\Pr[R_i = m] = \Pr[R'_i \geq m]$. 
    It obviously holds that $R'_i\sim \text{Bin}(b/p, p)$. 
    By the Poisson limit theorem, we have $R'_i \sim \text{Pois}(b)$ as $p\rightarrow 0$.  
    Finally, it holds that $\EX[R_i] \leq \EX[R'_i] = b$.
\end{proof}

With the help of Lemma \ref{lem:round}, we can now upper bound the expected number of successes for each server when executing
\textsc{StochasticBalance} on $G_n^b$. 

\begin{lemma}
\label{lem:sbalserver}
    Let the random variable $S_j$ denote the total number of matches of server $s_j$ after executing \textsc{StochasticBalance} on $G_n^b$. It holds that 
    \[
    \EX[S_j] \leq \min\left\{\sum_{i=1}^j \frac{b}{n-i+1},b \right\} = b\cdot  \min\left\{\sum_{i=1}^j \frac{1}{n-i+1},1 \right\} \,.
    \]
\end{lemma}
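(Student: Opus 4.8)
The plan is to analyze \textsc{StochasticBalance} on $G_n^b$ round by round, tracking the load distributed across servers, and to establish the bound on $\EX[S_j]$ by a combination of: (i) the trivial cap $S_j \le b$ (server capacity); and (ii) an accounting argument that charges matches to rounds and uses the symmetry that in round $i$ all servers $s_i,\dots,s_n$ are treated identically. First I would observe that server $s_j$ can only receive assignments (and hence matches) during rounds $1,\dots,j$, since in round $i$ only servers $s_i,\dots,s_n$ are present. So $S_j = \sum_{i=1}^{j} S_{j,i}$ where $S_{j,i}$ is the number of matches $s_j$ accrues in round $i$. The key structural fact is that \textsc{StochasticBalance} assigns to the least-loaded available server, so within round $i$ the requests are spread as evenly as possible among $s_i,\dots,s_n$; by symmetry, the number of matches $s_j$ gets in round $i$ is, in expectation, at most the average $\EX[R_i]/(n-i+1)$. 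Combined with $\EX[R_i]\le b$ from Lemma~\ref{lem:round}, this gives $\EX[S_{j,i}] \le b/(n-i+1)$, and summing over $i=1,\dots,j$ yields the first term in the min.

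To make the symmetry argument rigorous I would use a coupling / exchangeability argument: since servers $s_i,\dots,s_n$ enter round $i$ in a state that is, conditionally, symmetric under permutation (they have never been exposed to any request before round $i$ that distinguishes them — requests of earlier rounds $i' < i$ touch $s_{i'},\dots,s_n$, so the relative loads among $s_i,\dots,s_n$ are determined by an exchangeable process), and since \textsc{StochasticBalance}'s tie-breaking can be taken uniformly at random (or the bound argued for the worst tie-break via an averaging over a symmetric orbit), the expected number of matches in round $i$ is shared equally in expectation. Hence $\EX[S_{j,i}]$ is the same for all $j \in \{i,\dots,n\}$ and equals $\EX[R_i]/(n-i+1) \le b/(n-i+1)$. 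The $=$ in the lemma statement is just factoring out $b$.

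The main obstacle is justifying the per-round symmetry cleanly: one must argue that conditioning on the full history up to the start of round $i$, the distribution of $(\text{load of }s_i,\dots,\text{load of }s_n)$ is invariant under permutations of these coordinates. This is plausible because each earlier round $i'$ affects exactly the suffix $s_{i'},\dots,s_n$ and treats all of its servers symmetrically, but the induction has to be set up carefully — the loads of $s_i,\dots,s_n$ at the start of round $i$ are built up from rounds $1,\dots,i-1$, each of which acts symmetrically on its own (larger) suffix, and one needs that the restriction to the common suffix $\{s_i,\dots,s_n\}$ remains exchangeable. I would formalize this by induction on $i$: assuming exchangeability of the loads of $s_{i-1},\dots,s_n$ at the start of round $i-1$, show that after processing round $i-1$ (an opportunistic, permutation-equivariant procedure on those servers) the loads of $s_i,\dots,s_n$ remain exchangeable. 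Then the bound $\EX[S_{j,i}] \le b/(n-i+1)$ follows, the sum over $i \le j$ gives $\sum_{i=1}^{j} b/(n-i+1)$, and intersecting with the capacity bound $S_j \le b$ completes the proof.
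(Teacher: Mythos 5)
Your proposal is correct and follows essentially the same route as the paper: decompose $S_j=\sum_{i=1}^j S_j^i$, use the fact that servers $s_i,\dots,s_n$ have identical neighborhoods up to round $i$ and are therefore treated symmetrically by \textsc{StochasticBalance} to get $\EX[S_j^i]=\EX[R_i]/(n-i+1)\le b/(n-i+1)$ via Lemma~\ref{lem:round}, then sum and intersect with the trivial capacity bound $S_j\le b$. The only difference is that you elaborate the exchangeability/tie-breaking formalization, which the paper treats as immediate from the symmetry of the construction.
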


\begin{proof}
    Let $S_j^i$ be the number of matches of server $s_j$ during round $i$. 
Observe that all servers $\{s_i, s_{i+1}, \ldots s_n\}$ have the same neighbors until round $i$. 
Therefore, they are treated identically by \textsc{StochasticBalance}. It follows that $\EX[S_j^i]$ is equal for all $i \leq j \leq n$.
Note that $R_i = \sum_{j=i}^n S_j^i$, implying $\EX[S_j^i] = \EX[R_i]/(n-i+1)$. By Lemma \ref{lem:round}, we further get 
$\EX[S_j^i] \leq b/(n-i+1)$, for all $i$. 
Since $s_j$ is a neighbor of all requests from rounds 1 through $j$, we have $S_j = \sum_{i=1}^j S_j^i$. By linearity of 
expectation, $\EX[S_j] = \sum_{i=1}^j \EX[S_j^i] \leq \sum_{i=1}^j \frac{b}{n-i+1}$. 
Moreover, $S_j \leq b$ trivially holds true, which finishes the proof. 
\end{proof}

A reader may recognize the expression $\min\left\{\sum_{i=1}^j \frac{1}{n-i+1},1 \right\}$. It is exactly the load that the server $s_j$ 
is assigned by the well-known (fractional) \textsc{Water-Filling} algorithm on $G_n^1$ with $p=1$. 
Moreover, the same expression also shows up in the proof of the upper bound of $1-1/e$ for the online $b$-matching problem \cite{MSVV}.
Hence it has been shown before that summing up all $\EX\left[S_j\right]$ and dividing by $n\cdot b$ yields $1-1/e$ 
in the limit $n \rightarrow \infty$.
For the sake of completeness, we briefly prove this here as well.

\begin{lemma}
\label{lem:sbal}
    Let the random variable $\textsc{SBal}\left(G_n^b\right)$ denote the number of matches generated by \textsc{StochasticBalance} on $G_n^b$. It holds that 
    \[
        \frac{\EX\left[\textsc{SBal}\left(G_n^b\right)\right]}{n\cdot b} \leq \frac{\left\lceil\left(1-\frac{1}{e}\right)(n+1)\right\rceil \cdot b}{n\cdot b} \xrightarrow{n\rightarrow\infty}1-\frac{1}{e} \,.
    \]
\end{lemma}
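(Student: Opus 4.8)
The plan is to feed Lemma~\ref{lem:sbalserver} into linearity of expectation and then reduce the whole statement to a classical estimate on partial sums of harmonic numbers. Since $\textsc{SBal}(G_n^b)=\sum_{j=1}^n S_j$, we obtain
$\EX[\textsc{SBal}(G_n^b)]=\sum_{j=1}^n\EX[S_j]\le b\sum_{j=1}^n\min\{f(j),1\}$,
where $f(j):=\sum_{i=1}^j\frac{1}{n-i+1}=H_n-H_{n-j}$ and $H_m:=\sum_{i=1}^m\frac 1i$ is the $m$-th harmonic number. So it suffices to prove $\sum_{j=1}^n\min\{f(j),1\}\le\big\lceil(1-\tfrac 1e)(n+1)\big\rceil$; dividing through by $nb$ and noting $\big\lceil(1-\tfrac1e)(n+1)\big\rceil/n\to 1-\tfrac1e$ then yields both the inequality and the stated limit. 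This is essentially the computation already used in the analysis of \textsc{Water-Filling} and in the $b$-matching upper bound of~\cite{MSVV}; I would just redo it in closed form.

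To estimate the sum, let $r$ be the smallest integer with $H_r>H_n-1$ (so $1\le r\le n$). As $f$ is increasing in $j$, we have $f(j)<1$ precisely for $j\le n-r$, and $f(j)\ge 1$ for $j>n-r$, hence $\sum_{j=1}^n\min\{f(j),1\}=\sum_{j=1}^{n-r}f(j)+r$. Swapping the order of summation and using $\frac{n-r-i+1}{n-i+1}=1-\frac{r}{n-i+1}$, I would compute
$\sum_{j=1}^{n-r}f(j)=\sum_{i=1}^{n-r}\frac{n-r-i+1}{n-i+1}=(n-r)-r(H_n-H_r)$,
which gives the clean identity $\sum_{j=1}^n\min\{f(j),1\}=n-r(H_n-H_r)$. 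By minimality of $r$ we have $H_{r-1}\le H_n-1$, i.e.\ $H_n-H_r\ge 1-\tfrac1r$, and therefore $\sum_{j=1}^n\min\{f(j),1\}\le n-r\big(1-\tfrac1r\big)=n-r+1$.

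It remains to show $r\ge\lfloor(n+1)/e\rfloor$. For $n\ge 2$ set $m:=\lfloor(n+1)/e\rfloor-1\ge 0$, so $m+1\le(n+1)/e$; the elementary comparison $\frac 1i\ge\int_i^{i+1}\frac{dx}{x}$ summed over $i=m+1,\dots,n$ gives $H_n-H_m\ge\int_{m+1}^{n+1}\frac{dx}{x}=\ln\frac{n+1}{m+1}\ge\ln e=1$, i.e.\ $H_m\le H_n-1$, whence $r>m$ and thus $r\ge\lfloor(n+1)/e\rfloor$. (The single remaining case $n=1$ is checked by hand.) Combining with the previous paragraph, $\sum_{j=1}^n\min\{f(j),1\}\le n+1-\lfloor(n+1)/e\rfloor=\big\lceil(1-\tfrac1e)(n+1)\big\rceil$, using $\lceil x\rceil=-\lfloor -x\rfloor$. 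Multiplying by $b$, dividing by $nb$, and letting $n\to\infty$ completes the proof.

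I do not expect a genuine obstacle: the probabilistic content is entirely in Lemmas~\ref{lem:round} and~\ref{lem:sbalserver}, and what is left is bookkeeping. The one place that needs a little care is pinning down the exact right-hand side $\big\lceil(1-\tfrac1e)(n+1)\big\rceil$ rather than a weaker $(1-\tfrac1e)n(1+o(1))$; for this the \emph{sharp} comparison $H_n-H_m\ge\ln\frac{n+1}{m+1}$ (not the cruder $H_n-H_m\approx\ln(n/m)$) is exactly what makes the constant and the floor/ceiling line up, and I would double-check the boundary index $m=\lfloor(n+1)/e\rfloor-1$ to make sure the integral bound still delivers a value of at least $1$.
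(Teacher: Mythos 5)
Your proof is correct and follows essentially the same route as the paper: both reduce to bounding $\sum_{j=1}^n\min\{H_n-H_{n-j},1\}$, split the sum at the index where the inner harmonic partial sum first reaches $1$, swap the order of summation to collapse the double sum, and use the same integral comparison $\sum_{i=m+1}^{n}\frac1i\ge\ln\frac{n+1}{m+1}$ to place that threshold at $(1-\frac1e)(n+1)$. The only (cosmetic) difference is that you work with the exact crossing index $r$ and evaluate the truncated sum in closed form, whereas the paper plugs in the explicit candidate $k=\bigl\lceil(1-\frac1e)(n+1)\bigr\rceil$ and verifies the bound $\le k$ directly.
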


\begin{proof}
It holds that $\textsc{SBal}\left(G_n^b\right) = \sum_{j=1}^n S_j$. Lemma \ref{lem:sbalserver} yields 
    \[
        \EX[\textsc{SBal}\left(G_n^b\right)] \leq b\cdot  \sum_{j=1}^n \min\left\{\sum_{i=1}^j \frac{1}{n-i+1},1 \right\} \,.
    \]
    In the following, we will identify an index $k$ depending on $n$, such that $\sum_{i=1}^{k} \frac{1}{n-i+1} \geq 1$.
    Then, it follows that 
    \begin{align*}
        \sum_{j=1}^n \min\left\{\sum_{i=1}^j \frac{1}{n-i+1},1 \right\} &\leq \sum_{j=1}^k \sum_{i=1}^j \frac{1}{n-i+1} + (n-k)\cdot 1 \\
        &\leq \sum_{i=1}^k \sum_{j=i}^k \frac{1}{n-i+1} + (n-k)\cdot \sum_{i=1}^k\frac{1}{n-i+1} \\
        &= \sum_{i=1}^k \frac{(k-i+1)+(n-k)}{n-i+1} = k \,.
    \end{align*}
    Thus, $\EX[\textsc{SBal}] \leq b\cdot k$.

    To find $k$, we lower bound the sum $\sum_{i=1}^{k} \frac{1}{n-i+1}$ by a suitable integral. We have
    \begin{align*}
        \sum_{i=1}^{k} \frac{1}{n-i+1} = \sum_{i=n-k+1}^n \frac{1}{i} \geq \int_{n-k+1}^{n+1} \frac{1}{x} \ \mathrm{d}x = \ln\left(\frac{n+1}{n-k+1} \right) \,.
    \end{align*}
    Solving the inequality $\ln\left(\frac{n+1}{n-k+1} \right) \geq 1$ for $k$ yields 
    \[
        k \geq \left(1-\frac{1}{e}\right)(n+1) \,.
    \]
    Taking $k=\left\lceil\left(1-\frac{1}{e}\right)(n+1)\right\rceil$ finishes the proof.
\end{proof}

\subsection{Stochastic Benchmark on $G_n^b$}
\label{sec:ubSOPT}
In this section, we quantify the stochastic benchmark \textsc{SOpt} on $G_n^b$. Afterward, we compare it to the upper bound on the expected 
number of matches by \textsc{StochasticBalance} to obtain the hardness result. Recall 
that the stochastic benchmark is defined as the expected number of matches of the best possible algorithm that knows the entire graph
including edge probabilities in advance. However, said algorithm does not know which edges will be successful a priori and still 
needs to match the requests according to the arrival order. By definition, the expected number of matches of any algorithm with the restrictions above
lower bounds \textsc{SOpt}.

In the following, we analyze the algorithm \textsc{Greedy} on $G_n^b$. \textsc{Greedy} simply assigns an incoming request to the 
server with the smallest index among all servers with remaining capacity. Note that \textsc{Greedy} needs to know the 
graph in advance to identify the indices of the servers. 
One could show that \textsc{Greedy} is in fact the best 
possible algorithm on $G_n^b$ that obeys the restriction posed by the stochastic benchmark. However, this is not necessary here, as a 
lower bound for \textsc{SOpt} suffices to show an upper bound for $\EX[\textsc{SBal}\left(G_n^b\right)]/\textsc{SOpt}\left(G_n^b\right)$, which is what we are interested in.

To simplify the analysis, we only analyze the exact expected number of successes generated by \textsc{Greedy} on $G_n^1$. 
However, this will suffice to show the same upper bound for all values of $b$. 
Let $T_{n,m}$ be the number of matches by \textsc{Greedy} on $G_n^1$, where only the last $m$ ($\leq n$) servers, i.e. $s_{n-m+1}, \ldots, s_n$, are 
present. Note that we want to determine $\EX\left[T_{n,n} \right]$. We use the recursive nature of $G_n^1$ together with the 
law of total expectation to develop a recurrence relation for $\EX\left[T_{n,m} \right]$.

\begin{lemma}
\label{lem:Trec}
    For $m<n$, it holds that 
    \begin{equation}
    \label{equ:Trec1}
        \EX\left[T_{n,m} \right] = \sum_{k=0}^{m-1}\frac{1}{k!e}\cdot\left(k+\EX\left[T_{n-1,m-k} \right]\right) + \left(1-\sum_{k=0}^{m-1}\frac{1}{k!e}\right)\cdot m \,.
    \end{equation}
    Moreover, for $m=n$, we have 
    \begin{equation}
    \label{equ:Trec2}
        \EX\left[T_{n,n} \right]= \frac{1}{e} \cdot \EX\left[T_{n-1,n-1} \right] + \sum_{k=1}^{n-1}\frac{1}{k!e}\cdot\left(k+\EX\left[T_{n-1,n-k} \right]\right) + \left(1-\sum_{k=0}^{n-1}\frac{1}{k!e}\right)\cdot n \,.
    \end{equation}
\end{lemma}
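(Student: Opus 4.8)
The plan is to condition on the outcome of round~$1$ and exploit the self-similar structure of $G_n^1$. Adopting the perspective of round~$1$, as in the proof of Lemma~\ref{lem:round}: the requests of round~$1$ are incident to $s_1,\dots,s_n$, of which only the last $m$ servers $s_{n-m+1},\dots,s_n$ are present, each with unused capacity~$1$. Hence Lemma~\ref{lem:round} with $b=1$ gives that the number $R_1$ of matches in round~$1$ satisfies $\Pr[R_1=k]=\tfrac{1}{k!\,e}$ for $0\le k<m$ and $\Pr[R_1=m]=1-\sum_{j=0}^{m-1}\tfrac{1}{j!\,e}$. The number of matches produced in rounds $2,\dots,n$ depends only on which servers remain available after round~$1$ and on fresh, independent coin tosses, so the law of total expectation reduces the lemma to computing, for each $k$, the conditional expected number of matches in rounds $2,\dots,n$ given $R_1=k$.

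Here the key point is that \textsc{Greedy} always assigns to the smallest-index available server, so the set of servers that are full after round~$1$ is a \emph{deterministic} function of $R_1$: for $k<m$ it equals $\{s_{n-m+1},\dots,s_{n-m+k}\}$, and for $k=m$ it is all of $\{s_{n-m+1},\dots,s_n\}$. When $k=m$ no server remains, so rounds $2,\dots,n$ add nothing; this gives the last summand of \eqref{equ:Trec1} (and of \eqref{equ:Trec2} when $m=n$). When $k<m$, reindexing $s_j\mapsto s_{j-1}$ shows that rounds $2,\dots,n$ of $G_n^1$ form a faithful copy of $G_{n-1}^1$, each of its $n-1$ rounds again carrying $1/p$ requests of probability~$p$, with coin tosses independent of round~$1$. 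In this copy a server receives assignments iff it is present and not yet full, and since \textsc{Greedy} skips absent and full servers alike, the process on rounds $2,\dots,n$ conditioned on $R_1=k$ is distributed exactly like \textsc{Greedy} on $G_{n-1}^1$ with only its last $m-k$ servers available --- which is the definition of $T_{n-1,m-k}$. So that conditional expectation is $\EX[T_{n-1,m-k}]$, and since $m<n$ forces $1\le m-k\le m\le n-1$ this quantity is well defined. Combining the cases with the probabilities above yields \eqref{equ:Trec1}.

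For $m=n$ the argument is the same except in the case $k=0$. Round~$1$ now sees all $n$ servers, but rounds $2,\dots,n$ live on only the $n-1$ servers $s_2,\dots,s_n$. If $R_1=0$, no server is full, so all $n-1$ of these are available and the remainder is a \emph{full} copy of $G_{n-1}^1$, contributing $\EX[T_{n-1,n-1}]$ --- not $\EX[T_{n-1,n}]$, which would be meaningless. For $1\le k\le n-1$ the full servers are $s_1,\dots,s_k$; among $s_2,\dots,s_n$ exactly $s_2,\dots,s_k$ are full, leaving the last $n-k$ available, so the contribution is $k+\EX[T_{n-1,n-k}]$. For $k=n$ every server is full after round~$1$, contributing $n$ with probability $1-\sum_{k=0}^{n-1}\tfrac{1}{k!\,e}$. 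Collecting the three cases gives \eqref{equ:Trec2}.

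The step that really needs care is the structural claim underlying both paragraphs: conditioning on $R_1=k$ leaves a match process identical \emph{in distribution} to \textsc{Greedy} on the smaller instance $G_{n-1}^1$ with a prescribed set of available servers (its last $m-k$ when $m<n$; its last $n-k$, all $n-1$ of them when $k=0$, when $m=n$). This rests on two facts worth spelling out: (i) \textsc{Greedy}'s index-priority rule makes the set of servers full after round~$1$ a deterministic function of $k$; and (ii) \textsc{Greedy} treats a full server exactly like an absent one, so only the set of available servers matters, and $G_{n-1}^1$ with some prefix of its servers made unavailable generates the same distribution of matches as $G_{n-1}^1$ restricted to the complementary suffix. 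The remaining ingredients --- the law of $R_1$, independence across rounds, and the final bookkeeping --- are routine given Lemma~\ref{lem:round}.
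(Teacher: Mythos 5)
Your proposal is correct and follows essentially the same route as the paper's proof: condition on $R_1$ via Lemma~\ref{lem:round} and the law of total expectation, observe that \textsc{Greedy}'s index-priority rule makes the set of filled servers a deterministic function of $R_1=k$ so that the remaining rounds are a copy of $G_{n-1}^1$ with the last $m-k$ servers present, and handle the $m=n$, $k=0$ edge case separately (the paper encodes this as $\EX[T_{n-1,\min\{n-1,m-k\}}]$ where you split into explicit cases). Your write-up is somewhat more explicit about the independence of coin tosses across rounds and about why only the set of available servers matters, but the substance is identical.
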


\begin{proof}
    Consider the number of successes $R_1$ during the first round of requests. 
    As \textsc{Greedy} is an opportunistic algorithm, Lemma \ref{lem:round} can also be applied here. 
    The law of total expectation then yields 
    \begin{equation}
    \label{equ:totalEx}
        \EX\left[T_{n,m} \right] = \sum_{k=0}^m \EX\left[T_{n,m} \mid R_1 = k\right]\cdot \Pr[R_1 = k] \,.
    \end{equation}

    Assume that only the $m$ servers $s_{n-m+1}, \ldots, s_n$ were present at the beginning. 
    Then, if there are $k$ matches during round $1$, we know that the $k$ servers $s_{n-m+1}, \ldots, s_{n-m+k}$ 
    have been matched, by the definition of \textsc{Greedy}.
    Thus, observe that the remaining rounds are identical to the execution of \textsc{Greedy} on $G_{n-1}^1$ where 
    only the last $m-k$ servers are present. Therefore, we are inclined to write 
    $\EX\left[T_{n,m} \mid R_1 = k\right] = k+\EX\left[T_{n-1,m-k}\right]$. However, note that there is an edge case if $m=n$ and $k=0$. 
    If we start with the full graph $G_n^1$ and there are no matches during round $1$, the remaining 
    rounds are identical to $G_{n-1}^1$. Therefore, the precise identity is 
    \[
        \EX\left[T_{n,m} \mid R_1 = k\right] = k+\EX\left[T_{n-1,\min\{n-1,m-k\}}\right] \,.
    \]
    Plugging this into (\ref{equ:totalEx}) and using the probability distribution of $R_1$ according to Lemma \ref{lem:round} 
    finishes the proof. 
\end{proof}

The following lemma solves the recurrence relation. 

\begin{lemma}
\label{lem:Tsolved}
    For all $m\leq n$, it holds that
    \begin{equation*}
       \EX\left[T_{n,m} \right] =  m-\sum_{k=0}^{m-1} \frac{(n-k)^{m-k-1}}{(m-k-1)!e^{n-k}} \,. 
    \end{equation*}
\end{lemma}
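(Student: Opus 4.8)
The plan is to prove the closed form by induction on $n$, using the two-part recurrence from Lemma~\ref{lem:Trec} as the driver. For the base case one checks $n=1$ directly: with a single server, $\EX[T_{1,1}]$ equals the probability that round~$1$ produces at least one success, which by Lemma~\ref{lem:round} is $1 - 1/e$, and this matches $1 - \frac{1^{0}}{0!\,e^{1}} = 1 - 1/e$. For the inductive step, assume the formula holds for $n-1$ and all $m \le n-1$, and prove it for $n$ and all $m \le n$. It is natural to treat the two cases $m < n$ and $m = n$ in parallel, since \eqref{equ:Trec2} is exactly \eqref{equ:Trec1} with the single correction that the $k=0$ term uses $\EX[T_{n-1,n-1}]$ in place of the (nonexistent) $\EX[T_{n-1,n}]$.

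First I would substitute the induction hypothesis $\EX[T_{n-1,m-k}] = (m-k) - \sum_{\ell=0}^{m-k-1} \frac{(n-1-\ell)^{m-k-\ell-1}}{(m-k-\ell-1)!\,e^{n-1-\ell}}$ into the right-hand side of \eqref{equ:Trec1}. The terms $\sum_{k=0}^{m-1}\frac{1}{k!e}(k + (m-k)) + (1-\sum_{k=0}^{m-1}\frac{1}{k!e})\,m$ collapse to $m$ (the $k + (m-k) = m$ cancels the weighting), so the task reduces to showing that the remaining double sum equals $\sum_{k=0}^{m-1}\frac{(n-k)^{m-k-1}}{(m-k-1)!\,e^{n-k}}$. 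Concretely, after reindexing, one must verify the identity
\[
\sum_{k=0}^{m-1} \frac{1}{k!\,e}\sum_{\ell=0}^{m-k-1} \frac{(n-1-\ell)^{m-k-\ell-1}}{(m-k-\ell-1)!\,e^{n-1-\ell}}
= \sum_{j=0}^{m-1} \frac{(n-j)^{m-j-1}}{(m-j-1)!\,e^{n-j}} \,.
\]
Grouping the left side by the value $j = k + \ell$ (so $j$ runs from $0$ to $m-1$, and for fixed $j$ the inner index $k$ runs from $0$ to $j$), the coefficient of $\frac{1}{e^{n-j}}$ on the left becomes $\sum_{k=0}^{j} \frac{1}{k!}\cdot\frac{(n-1-(j-k))^{\,m-j-1}}{(m-j-1)!}$, and the claim is that this equals $\frac{(n-j)^{m-j-1}}{(m-j-1)!}$. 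Writing $a = n-1-j$ and $t = m-j-1 \ge 0$, this is precisely the Abel-type summation identity $\sum_{k=0}^{j} \binom{j}{k}\,\frac{k!}{j!}\cdot\frac{(a+k)^{t}}{1} = \frac{(a+1+j)^{t}}{1}$ — equivalently $\sum_{k=0}^{j} \frac{(a+k)^{t}}{k!\,(j-k)!}$ should telescope correctly; the cleanest route is to recognize the left side as the $j$-th forward-difference-free expansion and apply the finite-difference identity $\sum_{k=0}^{j}\binom{j}{k}(-1)^{j-k}(a+k)^{t}=0$ for $t<j$, or more directly to use the binomial-type convolution $\sum_{k}\binom{j}{k}(x)_k\,y^{j-k}$-style generating-function argument with $e^{x}$.

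For the $m=n$ case, the only change is that the $k=0$ contribution brings in $\EX[T_{n-1,n-1}]$ rather than $\EX[T_{n-1,n}]$; but notice that $\min\{n-1, n-k\} = n-1$ exactly when $k=0$ and equals $n-k$ otherwise, which is already reflected in the statement of Lemma~\ref{lem:Trec}, so substituting the induction hypothesis form for $\EX[T_{n-1,n-1}]$ (which has upper summation limit $n-2$) versus the "formal" $m-k = n$ version differs only in the would-be $\ell = n-1$ term $\frac{(n-1-(n-1))^{0}}{0!\,e^{0}} = \frac{0^{0}}{1} $, i.e. a term evaluating with base $0$; one checks this boundary term contributes nothing, so the same algebraic collapse goes through verbatim. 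I expect the main obstacle to be the combinatorial identity in the previous paragraph: making the regrouping by $j = k+\ell$ rigorous and identifying the resulting inner sum as a known binomial/finite-difference identity (an Abel identity, essentially $\sum_{k=0}^{j}\binom{j}{k}(a+k)^{t} = $ the right constant when $t = j-1$, with the general case following from the functional equation of the exponential). Everything else — the base case, the cancellation producing the leading $m$, and the $0^0$ boundary bookkeeping in the $m=n$ case — is routine.
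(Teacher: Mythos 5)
Your overall skeleton matches the paper's: induction on $n$, substitute the hypothesis into the recurrence of Lemma~\ref{lem:Trec}, observe that the $k+(m-k)$ terms collapse to $m$, and reduce everything to a double-sum identity. However, the key step — the regrouping of that double sum — is wrong as you have set it up, and the identity you reduce to is false. In your left-hand side
\[
\sum_{k=0}^{m-1} \frac{1}{k!\,e}\sum_{\ell=0}^{m-k-1} \frac{(n-1-\ell)^{m-k-\ell-1}}{(m-k-\ell-1)!\,e^{n-1-\ell}}
\]
the generic term carries the factor $e^{-(n-\ell)}$ (the outer $1/e$ times $e^{-(n-1-\ell)}$), which depends on $\ell$ alone, whereas the target term indexed by $j$ carries $e^{-(n-j)}$. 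So the correct grouping is by $\ell=j$, \emph{not} by $j=k+\ell$: your regrouping mixes terms with different powers of $e$, and when you write the ``coefficient of $e^{-(n-j)}$'' you have silently discarded a residual $e^{-k}$. The identity you are then left to prove, $\sum_{k=0}^{j}\frac{(a+k)^{t}}{k!}=(a+1)^{t}$ with $a=n-1-j$, $t=m-j-1$, is simply false (take $j=1$, $t=0$: the left side is $2$, the right side is $1$), so no Abel or finite-difference identity will rescue it. The paper instead swaps the summation order keeping $\ell$ (its $i$) outermost; for fixed $\ell$ the inner sum over $k\in\{0,\dots,m-\ell-1\}$ is
\[
\frac{1}{(m-\ell-1)!}\sum_{k=0}^{m-\ell-1}\binom{m-\ell-1}{k}(n-1-\ell)^{m-\ell-1-k},
\]
which is just the binomial expansion of $(n-\ell)^{m-\ell-1}$ — no deeper combinatorial identity is needed.

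Your treatment of the $m=n$ case is also too quick. Replacing $\EX[T_{n-1,n-1}]$ by the ``formal'' $\EX[T_{n-1,n}]$ is not a matter of one vanishing $0^0$ boundary term: the two closed forms have different exponents ($(n-1-\ell)^{n-\ell-2}$ versus $(n-1-\ell)^{n-\ell-1}$ termwise), so the $k=0$ contribution genuinely differs from the pattern of the $k\ge 1$ terms. In the paper this case requires a separate computation in which the sum arising from $\frac{1}{e}\EX[T_{n,n}]$ is rewritten so as to supply exactly the missing $k=0$ terms of the binomial expansions coming from the $k\ge 1$ part, after which a leftover $-1/e$ is absorbed as the final term of the outer sum. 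Your base case and the cancellation producing the leading $m$ are fine, but as written the proof does not go through.
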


\begin{proof}
    By induction over $n$. For $n=0$, $\EX\left[T_{0,0} \right] = 0$ is vacuously true.
    Hence assume that the formula for $\EX\left[T_{n,m}\right]$ holds for $n$ and all $m\leq n$. Now, consider $n+1$. If $m < n+1$, we have according to (\ref{equ:Trec1}) 
    \begin{align*}
        \EX\left[T_{n+1,m} \right] &= \sum_{k=0}^{m-1}\frac{1}{k!e}\cdot\left(k+\EX\left[T_{n,m-k} \right]\right) + \left(1-\sum_{k=0}^{m-1}\frac{1}{k!e}\right)\cdot m \\ 
        & \overset{IH.}{=} \sum_{k=0}^{m-1}\frac{1}{k!e}\cdot\left(k+\left( 
        m-k-\sum_{i=0}^{m-k-1} \frac{(n-i)^{m-k-i-1}}{(m-k-i-1)!e^{n-i}}
        \right)\right) + \left(1-\sum_{k=0}^{m-1}\frac{1}{k!e}\right)\cdot m \\
        &= m-\sum_{k=0}^{m-1}\frac{1}{k!e} \sum_{i=0}^{m-k-1} \frac{(n-i)^{m-k-i-1}}{(m-k-i-1)!e^{n-i}} \,.
    \end{align*}
    To simplify the double sum, we first change the order of summation. $i \leq m-k-1$ is equivalent to $k \leq m-i-1$. 
    Moreover, as $k\geq 0$, $i\leq m-1$. Thus, it holds
    \begin{align*}
        \sum_{k=0}^{m-1}\frac{1}{k!e} \sum_{i=0}^{m-k-1} \frac{(n-i)^{m-k-i-1}}{(m-k-i-1)!e^{n-i}} &= \sum_{i=0}^{m-1}\frac{1}{e^{n+1-i}}  \sum_{k=0}^{m-i-1} \frac{(n-i)^{m-k-i-1}}{k!(m-k-i-1)!}  \\
        &= \sum_{i=0}^{m-1}\frac{1}{(m-i-1)!e^{n+1-i}} \sum_{k=0}^{m-i-1} \binom{m-i-1}{k}(n-i)^{m-k-i-1}  \,.
    \end{align*}
    Now, note that the right sum is the binomial expansion of $(n-i+1)^{m-i-1}$.
    Putting everything together and renaming $i$ to $k$ finish the case $m<n+1$. 

    Similarly, we have for $m=n+1$ by (\ref{equ:Trec2}) 
    \begin{align}
        \EX\left[T_{n+1,n+1} \right] &= \frac{1}{e} \cdot \EX\left[T_{n,n} \right] + \sum_{k=1}^{n}\frac{1}{k!e}\cdot\left(k+\EX\left[T_{n,n+1-k} \right]\right) + \left(1-\sum_{k=0}^{n}\frac{1}{k!e}\right)\cdot (n+1) 
        \nonumber \\ 
        & \overset{IH.}{=} \frac{1}{e}\cdot\left( 
        n-\sum_{k=0}^{n-1} \frac{(n-k)^{n-k-1}}{(n-k-1)!e^{n-k}} 
        \right) + \nonumber \\ 
        & \quad \quad \sum_{k=1}^{n}\frac{1}{k!e}\cdot\left(k+\left( 
        (n+1)-k-\sum_{i=0}^{n-k} \frac{(n-i)^{n-k-i}}{(n-k-i)!e^{n-i}}
        \right)\right) +  \left(1-\sum_{k=0}^{n}\frac{1}{k!e}\right)\cdot (n+1) \nonumber \\
        &= (n+1)-\frac{1}{e}-\sum_{k=0}^{n-1} \frac{(n-k)^{n-k-1}}{(n-k-1)!e^{n+1-k}} -\sum_{k=1}^{n}\frac{1}{k!e} \sum_{i=0}^{n-k} \frac{(n-i)^{n-k-i}}{(n-k-i)!e^{n-i}} \label{equ:temp}\,.
    \end{align}
    
    We again simplify the double sum first by changing the order of summation. $i \leq n-k$ is equivalent to $k \leq n-i$ and $k\geq 1$ implies $i\leq n-1$. Therefore, 
    \begin{align*}
        \sum_{k=1}^{n}\frac{1}{k!e} \sum_{i=0}^{n-k} \frac{(n-i)^{n-k-i}}{(n-k-i)!e^{n-i}} &= \sum_{i=0}^{n-1}\frac{1}{e^{n+1-i}} \sum_{k=1}^{n-i} \frac{(n-i)^{n-k-i}}{k!(n-k-i)!}  \\
        &= \sum_{i=0}^{n-1}\frac{1}{(n-i)!e^{n+1-i}} \sum_{k=1}^{n-i} \binom{n-i}{k}(n-i)^{n-k-i}  \,.
    \end{align*}

    Next, consider the middle sum of (\ref{equ:temp}). We can rewrite it as
    \[
    \sum_{k=0}^{n-1} \frac{(n-k)^{n-k-1}}{(n-k-1)!e^{n+1-k}} = \sum_{k=0}^{n-1} \frac{(n-k)^{n-k}}{(n-k)!e^{n+1-k}} = \sum_{i=0}^{n-1} \frac{1}{(n-i)!e^{n+1-i}}\binom{n-i}{0} (n-i)^{n-i} \,.
    \]
    Putting everything together, we obtain
    \begin{align*}
        \EX\left[T_{n+1,n+1} \right] 
        &= (n+1)-\frac{1}{e}-\sum_{k=0}^{n-1} \frac{(n-k)^{n-k-1}}{(n-k-1)!e^{n+1-k}} -\sum_{k=1}^{n}\frac{1}{k!e} \sum_{i=0}^{n-k} \frac{(n-i)^{n-k-i}}{(n-k-i)!e^{n-i}} \\ 
        &= (n+1)-\frac{1}{e} - \sum_{i=0}^{n-1}\frac{1}{(n-i)!e^{n+1-i}} \sum_{k=0}^{n-i} \binom{n-i}{k}(n-i)^{n-k-i} \\
        &= (n+1)-\frac{1}{e} - \sum_{i=0}^{n-1}\frac{(n+1-i)^{n-i}}{(n-i)!e^{n+1-i}} \\ 
         &= (n+1) - \sum_{i=0}^{n}\frac{(n+1-i)^{n-i}}{(n-i)!e^{n+1-i}}\,.
    \end{align*}
\end{proof}

\begin{lemma}
\label{lem:Gre}
    Let the random variable $\textsc{Gre}\left(G_n^1\right)$ denote the number of matches generated by \textsc{Greedy} on $G_n^1$. It holds that 
    \[
        \frac{\EX\left[\textsc{Gre}\left(G_n^1\right)\right]}{n} = 1- \frac{1}{n}\sum_{k=1}^n \frac{k^{k-1}}{(k-1)!e^k} \xrightarrow{n\rightarrow\infty}1 \,.
    \]
\end{lemma}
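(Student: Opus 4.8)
The strategy is to read off the exact value of $\EX[\textsc{Gre}(G_n^1)]$ from Lemma~\ref{lem:Tsolved} and then control the resulting sum with a Stirling estimate.

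First I would note that $\textsc{Gre}(G_n^1)$ is exactly the random variable $T_{n,n}$ (the case in which all $n$ servers are present), so $\EX[\textsc{Gre}(G_n^1)] = \EX[T_{n,n}]$. Substituting $m=n$ into the closed form of Lemma~\ref{lem:Tsolved} gives $\EX[T_{n,n}] = n - \sum_{k=0}^{n-1} \frac{(n-k)^{n-k-1}}{(n-k-1)!\,e^{n-k}}$. Re-indexing the sum by $j=n-k$, so that $j$ runs over $1,\dots,n$, turns the summand into $\frac{j^{j-1}}{(j-1)!\,e^{j}}$, and dividing by $n$ yields precisely
\[
\frac{\EX[\textsc{Gre}(G_n^1)]}{n} = 1 - \frac{1}{n}\sum_{k=1}^{n} \frac{k^{k-1}}{(k-1)!\,e^{k}} .
\]
This settles the claimed identity; it remains to show that the average on the right tends to $0$.

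Next I would analyze the summand $a_k := \frac{k^{k-1}}{(k-1)!\,e^{k}}$. Using the lower Stirling bound $k! \ge \sqrt{2\pi k}\,(k/e)^k$, hence $(k-1)! = k!/k \ge \sqrt{2\pi k}\,k^{k-1}e^{-k}$, we get $a_k \le \frac{1}{\sqrt{2\pi k}}$ for all $k\ge 1$ (and in fact $a_k \sim \frac{1}{\sqrt{2\pi k}}$). In particular $a_k\to 0$, so either by the elementary bound $\frac{1}{n}\sum_{k=1}^{n} a_k \le \frac{1}{\sqrt{2\pi}}\cdot\frac{1}{n}\sum_{k=1}^n k^{-1/2} = O(n^{-1/2})$, or simply because the Cesàro average of a null sequence is null, we conclude $\frac{1}{n}\sum_{k=1}^{n} a_k \to 0$. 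Combining with the displayed identity gives $\EX[\textsc{Gre}(G_n^1)]/n \to 1$, as desired.

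This argument is essentially routine; the only points needing care are the bookkeeping in the re-indexing and invoking a version of Stirling's formula that delivers a clean one-sided bound, so I do not anticipate any real obstacle here. (As a sanity check one may observe that $\sum_k a_k$ diverges — consistent with $a_k \asymp k^{-1/2}$ — yet the extra factor $1/n$ still suppresses the partial sums.)
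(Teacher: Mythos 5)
Your proposal is correct and follows essentially the same route as the paper: read off the closed form from Lemma~\ref{lem:Tsolved} with $m=n$, re-index the sum, and then kill the Ces\`aro average via Stirling's approximation applied to the summand $k^{k-1}/((k-1)!\,e^k)$. The only (harmless) addition is your explicit one-sided bound $a_k\le 1/\sqrt{2\pi k}$ giving a quantitative $O(n^{-1/2})$ rate, where the paper just invokes the Ces\`aro Mean Theorem.
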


\begin{proof}
    Recall that $\textsc{Gre}\left(G_n^1\right) = T_{n,n}$. The equality then immediately follows from Lemma~\ref{lem:Tsolved} and a change of 
    summation order. Therefore, all that is left to show is that 
    \[
        \frac{1}{n}\sum_{k=1}^n \frac{k^{k-1}}{(k-1)!e^k} \xrightarrow{n\rightarrow\infty} 0 \,.
    \]

    By the Cesàro Mean Theorem, it suffices to show that $\frac{k^{k-1}}{(k-1)!e^k} \xrightarrow{k\rightarrow\infty} 0$. 
    Using the Stirling approximation $n! \sim \sqrt{2\pi n}\frac{n^n}{e^n}$, we have
    \[
    \lim_{k\rightarrow \infty} \frac{k^{k-1}}{(k-1)!e^k} = \lim_{k\rightarrow \infty} \frac{k^{k}}{k!e^k} = \lim_{k\rightarrow \infty} \frac{1}{\sqrt{2 \pi k}} = 0 \,.
    \qedhere
    \]
\end{proof}

\begin{theorem}
    No (randomized) algorithm achieves a competitive ratio greater than $1-1/e$ against \textsc{SOpt} for the online $b$-matching problem with stochastic rewards, for all $b$, even for equal and vanishing edge probabilities. 
\end{theorem}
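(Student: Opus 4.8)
The plan is to show that the family $G_n^b$ (with $p\to 0$, so that Lemmas~\ref{lem:round}--\ref{lem:Gre} apply) already forces the $1-1/e$ barrier, by combining the preceding lemmas with one extra structural observation about the stochastic benchmark. Fix an arbitrary capacity $b$. For any (randomized) online algorithm \textsc{Alg}, its competitive ratio against \textsc{SOpt} is at most $\EX[\textsc{Alg}(G_n^b)]/\textsc{SOpt}(G_n^b)$ for each $n$, so it suffices to upper bound the numerator, lower bound the denominator, and then let $n\to\infty$.

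For the numerator I would argue as follows. By Lemma~\ref{lem:MP}, \textsc{StochasticBalance} is an optimal online algorithm on $G_n^b$; and on a single fixed input a randomized online algorithm is just a distribution over deterministic ones, hence cannot beat the optimal deterministic algorithm in expectation. Therefore $\EX[\textsc{Alg}(G_n^b)]\le \EX[\textsc{SBal}(G_n^b)]$ for every \textsc{Alg}, and Lemma~\ref{lem:sbal} gives $\EX[\textsc{Alg}(G_n^b)]\le \lceil(1-1/e)(n+1)\rceil\cdot b$, a quantity which, divided by $nb$, tends to $1-1/e$.

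For the denominator I would lower bound $\textsc{SOpt}(G_n^b)$ by exhibiting one legal offline strategy (knowing the graph in advance, assigning requests in arrival order, learning success/failure only after each assignment), since any such strategy's expected number of matches is a lower bound for \textsc{SOpt}. The key observation is that $G_n^b$ splits into $b$ disjoint ``tracks'': partition the $b/p$ requests of each round into $b$ consecutive blocks of $1/p$ requests and dedicate the $\ell$-th unit of capacity of every server exclusively to the $\ell$-th block across all rounds. Each track is then an exact copy of $G_n^1$, the tracks are processed in the global arrival order and never compete for capacity, and their coin tosses are independent; hence running \textsc{Greedy} inside each track is a legal offline algorithm whose expected number of matches is exactly $b\cdot\EX[\textsc{Gre}(G_n^1)]$. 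So $\textsc{SOpt}(G_n^b)\ge b\cdot\EX[\textsc{Gre}(G_n^1)]$, which divided by $nb$ tends to $1$ by Lemma~\ref{lem:Gre}. Verifying that this track construction is genuinely a valid offline algorithm in the sense of the \textsc{SOpt} benchmark, and that expectations add cleanly across tracks, is the only genuinely new ingredient and thus the main point to get right; everything else reduces to the earlier lemmas.

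Putting the two bounds together, for every (randomized) \textsc{Alg},
\[
\frac{\EX[\textsc{Alg}(G_n^b)]}{\textsc{SOpt}(G_n^b)} \le \frac{\lceil (1-1/e)(n+1)\rceil\, b}{b\cdot\EX[\textsc{Gre}(G_n^1)]} = \frac{\lceil (1-1/e)(n+1)\rceil/n}{\EX[\textsc{Gre}(G_n^1)]/n} \xrightarrow{n\to\infty} \frac{1-1/e}{1} = 1-1/e ,
\]
so no algorithm can be $c$-competitive for any $c>1-1/e$, even on instances with equal, vanishing edge probabilities and for every value of $b$. Finally, since $\textsc{SOpt}(G)\le\textsc{Opt}(G)$ for all $G$ by~\cite{GNR}, the same upper bound holds against the non-stochastic benchmark as well.
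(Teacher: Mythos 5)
Your proposal is correct, and it follows the paper's overall strategy: upper-bound every online algorithm by $\EX[\textsc{SBal}(G_n^b)]$ via Lemma~\ref{lem:MP} and Lemma~\ref{lem:sbal}, lower-bound $\textsc{SOpt}(G_n^b)$ by an explicit \textsc{Greedy}-type offline strategy, and let $n\to\infty$. The one place where you take a genuinely different route is the reduction used for the \textsc{SOpt} bound. The paper vertex-splits each capacity-$b$ server into $b$ unit servers to obtain $G'_{n\cdot b}$, observes that the single long chain $G_{n\cdot b}^1$ is a subgraph of $G'_{n\cdot b}$, and applies Lemma~\ref{lem:Gre} with $n\cdot b$ servers, giving $\textsc{SOpt}(G_n^b)\ge \EX[\textsc{Gre}(G_{n\cdot b}^1)]$. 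You instead decompose $G_n^b$ itself into $b$ capacity-disjoint \emph{tracks}, each an exact copy of $G_n^1$, run \textsc{Greedy} independently in each, and get $\textsc{SOpt}(G_n^b)\ge b\cdot\EX[\textsc{Gre}(G_n^1)]$ via Lemma~\ref{lem:Gre} with $n$ servers. Your construction is sound: the blocks of each track arrive in the correct round order, the tracks never compete for capacity since each contributes at most one success per server (so the total per server is at most $b$), and linearity of expectation yields the factor $b$. It has the advantage of exhibiting the offline strategy directly on $G_n^b$, avoiding the intermediate claim $\textsc{SOpt}(G_n^b)=\textsc{SOpt}(G'_{n\cdot b})$; the paper's version gives a marginally tighter finite-$n$ bound because the error term $\frac{1}{N}\sum_{k=1}^{N}\frac{k^{k-1}}{(k-1)!e^k}$ is evaluated at $N=n\cdot b$ rather than $N=n$, but both vanish as $n\to\infty$, so the limits coincide. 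Your explicit remark that a randomized online algorithm on a fixed input is a distribution over deterministic ones, and hence cannot beat the optimum asserted in Lemma~\ref{lem:MP}, is a detail the paper leaves implicit and is worth keeping.
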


\begin{proof}
    By Lemma \ref{lem:MP}, \textsc{StochasticBalance} is optimal for all graphs $G_n^b$. 
    To lower bound \textsc{SOpt} on $G_n^b$, we perform the standard vertex-splitting reduction of the $b$-matching problem. 
    Recall that we replace every server $s_i$ with capacity $b$ by $b$ servers $s_i^1,s_i^2 \ldots, s_i^b$ with 
    unit capacity. Moreover, each server $s_i^j$ has the same neighbors as the original server $s_i$, where all edge probabilities 
    are still $p \rightarrow 0$.
    Call the resulting graph $G'_{n\cdot b}$. 
    As the fundamental structure of the graph remains the same, we can 
    translate any algorithm on $G_n^b$ into an equivalent algorithm on 
    $G'_{n\cdot b}$ and vice versa. Thus, we have 
    $\textsc{SOpt}\left(G_{n}^b\right) = \textsc{SOpt}\left(G'_{n\cdot b}\right)$. 
    Observe that $G'_{n\cdot b}$ has $n\cdot b$ servers and $n\cdot b/p$ requests. 
    In fact, the graph $G_{n\cdot b}^1$ is a subgraph of $G'_{n\cdot b}$. As any algorithm on $G'_{n\cdot b}$ can easily restrict 
    itself to only those edges present in $G_{n\cdot b}^1$, it holds that $\textsc{SOpt}\left(G'_{n\cdot b}\right) \geq \textsc{SOpt}\left(G_{n\cdot b}^1\right)$.
    Therefore, no (randomized) algorithm can achieve a competitive ratio greater than 
    \[
        \frac{\EX\left[\textsc{SBal}\left(G_n^b\right)\right]}{\textsc{SOpt}\left(G_{n}^b\right)} \leq \frac{\EX\left[\textsc{SBal}\left(G_n^b\right)\right]}{\textsc{SOpt}\left(G_{n\cdot b}^1\right)} \leq  \frac{\EX\left[\textsc{SBal}\left(G_n^b\right)\right]}{\EX\left[\textsc{Gre}\left(G_{n\cdot b}^1\right)\right]} =  \frac{\EX\left[\textsc{SBal}\left(G_n^b\right)\right]}{n\cdot b} \cdot \frac{n \cdot b}{\EX\left[\textsc{Gre}\left(G_{n\cdot b}^1\right)\right]} \,.
    \]
    Finally, Lemma \ref{lem:sbal} implies that the first fraction approaches $1-1/e$, while Lemma~\ref{lem:Gre} yields that the second 
    fraction approaches $1$, both for $n\rightarrow \infty$.
\end{proof}

Note that the upper bound of $1-1/e$ is as small as possible, even for $b=1$, as Goyal and Udwani \cite{GU}
show that \textsc{PerturbedGreedy} achieves this competitiveness against $\textsc{SOpt}$. 

\begin{corollary}
     No (randomized) algorithm achieves a competitive ratio greater than $1-1/e$ against \textsc{Opt} for the online $b$-matching problem with stochastic rewards, for all $b$, even for equal and vanishing edge probabilities. 
\end{corollary}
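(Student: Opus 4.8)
The plan is to derive the corollary directly from the Theorem, using the standing fact — recalled in the introduction — that $\textsc{SOpt}(G)\le\textsc{Opt}(G)$ for every graph $G$ (Golrezaei et al.~\cite{GNR}). Since $\mathbb{E}[\textsc{Alg}(G)]/\textsc{Opt}(G)\le\mathbb{E}[\textsc{Alg}(G)]/\textsc{SOpt}(G)$ holds for any online algorithm $\textsc{Alg}$ and any input $G$, any upper bound on the competitive ratio against \textsc{SOpt} is automatically an upper bound against \textsc{Opt}. So I would simply invoke the Theorem on the family $G_n^b$ and let $n\to\infty$: the Theorem's proof already establishes that the ratio against \textsc{SOpt} on $G_n^b$ is at most $1-1/e+o(1)$, and the inequality above transfers this to \textsc{Opt}.

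As an alternative, cleaner route I would note that one need not pass through \textsc{SOpt} at all. On $G_n^b$ we have $\textsc{Opt}(G_n^b)=n\cdot b$ (each of the $n$ servers accrues reward exactly $b$ from its $b/p$ round-$i$ edges of probability $p$), Lemma~\ref{lem:MP} guarantees that \textsc{StochasticBalance} is an optimal online algorithm on this fixed instance, and Lemma~\ref{lem:sbal} bounds $\mathbb{E}[\textsc{SBal}(G_n^b)]/(n\cdot b)\le\lceil(1-1/e)(n+1)\rceil/n$. Since every online algorithm — deterministic, or randomized and hence a distribution over deterministic ones — is dominated by \textsc{StochasticBalance} on $G_n^b$ by Lemma~\ref{lem:MP}, its ratio against \textsc{Opt} on $G_n^b$ is at most $\lceil(1-1/e)(n+1)\rceil/n$, which tends to $1-1/e$ as $n\to\infty$.

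I do not expect a genuine obstacle here: the corollary repackages work already done in the Theorem (and in Lemmas~\ref{lem:MP}--\ref{lem:Gre}). The only point I would flag is that, exactly as in the Theorem, the value $1-1/e$ is approached in the limit $n\to\infty$ rather than attained on a single finite graph; given $\varepsilon>0$ one fixes $n$ large enough that $\lceil(1-1/e)(n+1)\rceil/n<1-1/e+\varepsilon$ and uses $G_n^b$ (or, after vertex-splitting, $G_n^1$) as the hard instance.
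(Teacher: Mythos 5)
Your primary route is exactly the paper's: the corollary is stated without a separate proof precisely because $\textsc{SOpt}(G)\le\textsc{Opt}(G)$ for every $G$, so $\mathbb{E}[\textsc{Alg}(G)]/\textsc{Opt}(G)\le\mathbb{E}[\textsc{Alg}(G)]/\textsc{SOpt}(G)$ and the theorem's bound transfers verbatim. That part is correct and complete.

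Your alternative route is also correct and is genuinely different in an instructive way: for the \textsc{Opt} benchmark one never needs Section~\ref{sec:ubSOPT} at all. Since $\textsc{Opt}(G_n^b)=n\cdot b$ (assign the $b/p$ round-$i$ requests to server $s_i$, giving each server load exactly $b$), Lemma~\ref{lem:MP} dominates every online algorithm (randomized ones being distributions over deterministic ones on a fixed instance), and Lemma~\ref{lem:sbal} gives $\mathbb{E}[\textsc{SBal}(G_n^b)]/(n\cdot b)\to 1-1/e$, the corollary follows from roughly half the machinery. What the paper's route buys is the stronger theorem against \textsc{SOpt}, for which the entire \textsc{Greedy} recurrence of Lemmas~\ref{lem:Trec}--\ref{lem:Gre} is needed to certify $\textsc{SOpt}(G_{n\cdot b}^1)=(1-o(1))\,n\cdot b$; what your shortcut buys is a self-contained, elementary proof of the \textsc{Opt} statement that would have sufficed had only the corollary been the goal. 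Your closing remark about the bound being attained only in the limit $n\to\infty$ is also the right caveat and matches the paper's phrasing of the theorem.
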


\section{Lower Bound: Competitiveness of \textsc{StochasticBalance} for $b\rightarrow \infty$}
\label{sec:bal}
In this section, we analyze the \textsc{StochasticBalance} algorithm for large server capacities using the standard 
primal-dual framework by Devanur et al.~\cite{DJK}. 
Huang and Zhang \cite{HZ} were the first to 
successfully apply the primal-dual framework to online matching with stochastic rewards. They focus on the case $b=1$ with vanishing probabilities and show that 
the configuration LP has to be used for the analysis to obtain a non-trivial competitiveness for \textsc{StochasticBalance}. 
However, it turns out that the standard matching LP suffices to show the best possible competitive ratio of $1-1/e$ for 
$b\rightarrow \infty$, even for non-vanishing probabilities.

For any $s\in S$, let \emph{load} $l_s$ of a server $s$ denote
the sum of edge probabilities of requests assigned to $s$. 
If all edge probabilities are identical, {\sc StochasticBalance} simply assigns incoming requests to a neighbor with remaining capacity that has minimum load.
For arbitrary edge probabilities, the algorithm is generalized with the help 
of a non-decreasing function $f$, which is determined during the analysis to optimize the competitive ratio. 

\begin{algorithm}[H]
  \caption{Generalized {\sc StochasticBalance}}
  \label{alg:StochasticBalance}
  \SetAlgoLined
	\While{a new request $r\in R$ arrives} {
	Let $N(r)$ denote the set of neighbors of $r$ with remaining capacity\;
	 \eIf{$N(r)= \emptyset$}{
	  Do not assign $r$\;}{
	  assign $r$ to $\argmax \{p_{s,r}(1-f(l_s)) : s\in N(r)\}$ (break ties arbitrarily)\;
	 }
	}
\end{algorithm}

In the following, we conduct a primal-dual analysis of \textsc{StochasticBalance}. We obtain a competitive ratio of $1-1/e$ against 
the non-stochastic benchmark $\textsc{Opt}$. The result holds true for arbitrary, non-vanishing edge probabilities, if $b\rightarrow \infty$. 
Since the stochastic benchmark is easier than the non-stochastic one, the same competitive ratio is also achieved against $\textsc{SOpt}$. 
Afterward, we outline the changes that are necessary to extend this result to the more general vertex-weighted variant of the problem, 
where each server $s$ moreover has an individual server capacity $b_s$.
\begin{theorem}
\label{theo:SBalSOPT}
    \textsc{StochasticBalance} achieves a competitive ratio of $1-1/e$ for the online $b$-matching problem with stochastic rewards for arbitrary edge probabilities
    against both benchmarks, if $b\rightarrow \infty$.
\end{theorem}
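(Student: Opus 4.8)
The plan is to use the primal-dual framework of Devanur et al., constructing feasible dual variables whose total value is at most $\frac{1}{1-1/e}\EX[\textsc{SBal}]$, while the primal LP value equals $\textsc{Opt}$ (the non-stochastic benchmark). The standard matching LP has a variable $x_{s,r}$ per edge, with the objective $\sum_{\{s,r\}\in E} p_{s,r}x_{s,r}$, server constraints $\sum_{r} p_{s,r} x_{s,r} \le b$ for each $s$, and request constraints $\sum_{s} x_{s,r} \le 1$ for each $r$; its optimum upper-bounds $\textsc{Opt}(G)$. The dual has a variable $\alpha_s$ per server and $\beta_r$ per request, with constraints $\alpha_s + \beta_r \ge p_{s,r}$ (suitably scaled by $p_{s,r}$) for every edge, and objective $\sum_s b\,\alpha_s + \sum_r \beta_r$. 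I would run \textsc{StochasticBalance} and, whenever a request $r$ is assigned to server $s$ with current load $l_s$, increment the dual variables: set $\beta_r$ in proportion to $p_{s,r}(1-f(l_s))$ and add to $\alpha_s$ a contribution in proportion to $p_{s,r}f(l_s)$, all divided by an appropriate normalization so that the final dual objective matches the expected algorithmic gain. Crucially, since we work with expectations (stochastic rewards), the dual variables should be defined in expectation, or one charges the expected increment at each step using that the load $l_s$ is itself a random quantity evolving with the algorithm's coin tosses.

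The key steps, in order: (1) State the matching LP and its dual, and argue $\textsc{Opt}(G) \le \textsc{LP}(G) = \textsc{DualOpt}(G)$, so it suffices to exhibit a feasible dual of value $\le \frac{1}{1-1/e}\EX[\textsc{SBal}(G)]$. (2) Define the randomized dual assignment: for each match (or assignment) track the gain and split it between $\alpha_s$ and $\beta_r$ according to the splitting function $f$; here I would use the exponential choice $f(x) = e^{x-1}$ (or its truncation), which is the canonical choice yielding the $1-1/e$ factor for $b$-matching. (3) Show approximate dual feasibility: for any edge $\{s,r\}$, the final (expected) value of $\alpha_s + \beta_r$ is at least $\frac{(1-1/e)}{1} p_{s,r}$ — this is where the $b\to\infty$ limit enters, because one needs the load $l_s$ to be "continuous enough" that the discrete increments of size $p_{s,r}$ approximate an integral $\int_0^{l_s} f(t)\,dt$, and one needs a concentration-type argument that the final load of a server not saturated before $r$ is handled is small, so that $1-f(l_s)$ stays close to $1-f$ evaluated where it matters. (4) Show the dual objective equals $\frac{1}{1-1/e}\EX[\textsc{SBal}(G)]$ by the construction (the total dual mass created per match is exactly $\frac{1}{1-1/e}$ times the contribution to the algorithm's value, since $f(l_s) + (1-f(l_s)) = 1$ and the normalization is chosen accordingly). (5) Conclude, and note the bound transfers to $\textsc{SOpt}$ since $\textsc{SOpt}(G)\le\textsc{Opt}(G)$ by Golrezaei et al.

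The main obstacle I expect is step (3): proving approximate dual feasibility in the presence of both the stochastic rewards and the capacity constraint. Unlike the deterministic $b$-matching analysis, here an assignment to $s$ only sometimes succeeds, and the load $l_s$ (sum of edge probabilities of assigned requests, successful or not) governs the algorithm's tie-breaking, while the capacity counts only successes. One must carefully relate the number of successes at $s$ to its load, controlling the gap via the law of large numbers as $b\to\infty$ (the load can exceed $b$ by at most roughly the overshoot of a single step, and the success count concentrates around the load when probabilities are bounded away from $0$ — this is exactly where non-vanishing is used, or rather where large $b$ compensates). The delicate point is that when request $r$ arrives and server $s$ still has capacity, either $s$ was chosen (so $\alpha_s$ grew appropriately), or some $s'$ with $p_{s',r}(1-f(l_{s'})) \ge p_{s,r}(1-f(l_s))$ was chosen, which bounds $\beta_r$ from below in terms of $1-f(l_s)$; combining these over the history and taking expectations, while handling the boundary case where $l_s$ is already near its effective cap, gives the needed inequality only in the limit. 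I would isolate this into a lemma bounding $\EX[\alpha_s + \beta_r]$ from below by $(1-1/e - o_b(1))\,p_{s,r}$, with the $o_b(1)$ term vanishing as $b\to\infty$.
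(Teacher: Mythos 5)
Your proposal follows essentially the same route as the paper: the standard matching LP and its dual, the exponential splitting function $f$, the per-assignment update $\Delta\hat{x}(s)\propto p_{s,r}f(l_s)$ and $\hat{y}(r)\propto p_{s,r}(1-f(l_s))$ normalized by $c=1-1/e$, approximate dual feasibility via an integral approximation of the discrete load increments plus a Chebyshev-type concentration bound relating the success count to the load as $b\to\infty$, and the final transfer to \textsc{SOpt} via $\textsc{SOpt}\le\textsc{Opt}$. The one device you would still need to make your step (3) go through is the paper's conditioning trick: fix the outcomes $Z_{-s}$ of all edges not incident to $s$, observe that the execution coincides with the ``all edges of $s$ fail'' execution until $s$ saturates (so the candidate request sequence for $s$ is deterministic and the only remaining randomness is the stopping index of the $b$-th success), which is what turns ``taking expectations over the history'' into a tractable computation.
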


\begin{theorem}
\label{theo:ext}
    \textsc{StochasticBalance} achieves a competitive ratio of $1-1/e$ for the vertex-weighted online $b$-matching problem with stochastic rewards for arbitrary edge probabilities and individual server capacities against both benchmarks,
     if $b_{\min} := \min_{s\in S} b_s \rightarrow \infty$.
\end{theorem}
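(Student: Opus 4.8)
The plan is to extend the primal-dual analysis of Theorem~\ref{theo:SBalSOPT} from the uniform-capacity, unweighted setting to the setting with individual capacities $b_s$ and server weights $w_s$. Since we may assume everything from the excerpt up through Theorem~\ref{theo:SBalSOPT}, the main task is to identify the (few) places in that proof that depend on uniformity or unit weights and to check that they survive the generalization. First I would write down the vertex-weighted matching LP for $b$-matching with stochastic rewards: the primal has variables $x_{s,r}$ for each edge with objective $\sum_{s,r} w_s\, p_{s,r}\, x_{s,r}$, a capacity constraint $\sum_r p_{s,r} x_{s,r} \le b_s$ for each server $s$, and an assignment constraint $\sum_{s} x_{s,r} \le 1$ for each request $r$; the dual has variables $\alpha_s$ (one per server, scaled by $b_s$ in the objective) and $\beta_r$ (one per request) with constraints $w_s\, p_{s,r} \le \alpha_s + \beta_r$ for each edge — note the relevant benchmark \textsc{Opt} is exactly this LP's optimum, and I would first recall why that is (the per-server reward bound of Mehta–Panigrahi generalizes with $b$ replaced by $b_s$ and weight $w_s$).

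Next I would set up the charging scheme mirroring the uniform case. When \textsc{StochasticBalance} assigns a request $r$ to server $s$ and the assignment succeeds, it gains $w_s$; this gain is split between $\alpha_s$ (a contribution that depends on the current load $l_s$ of $s$, via the tuning function $f$) and $\beta_r$ (the remainder). The natural generalization is to measure load as a fraction of capacity: track $l_s / b_s \in [0,1]$, let the server absorb $w_s\, f(l_s/b_s)$ worth of dual weight when a match lands at load $l_s$, and assign $\beta_r = p_{s,r} w_s (1 - f(l_s/b_s))$ to the request (this is precisely why the algorithm's selection rule maximizes $p_{s,r}(1-f(l_s))$, which I would restate as $p_{s,r} w_s (1 - f(l_s/b_s))$ in this general version). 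I would then carry out the two standard checks: \textbf{(i) approximate dual feasibility} — for every edge $\{s,r\}$, either $s$ was full when $r$ arrived, in which case $\alpha_s$ has accumulated (in expectation, as $b_s \to \infty$) at least $w_s(1 - 1/e)$ by the choice of $f(z) = e^{z-1}$, so $\alpha_s + \beta_r \ge (1-1/e)\, w_s\, p_{s,r}$; or $s$ had capacity, in which case the greedy rule ensures $\beta_r \ge p_{s,r} w_s (1 - f(l_s/b_s)) \ge p_{r} \cdot (\text{the corresponding quantity for }s)$, again giving the factor after adding $\alpha_s$; and \textbf{(ii) the objective bound} — the expected primal gain of the algorithm equals, up to the $(1-1/e)$ factor, the expected dual objective $\sum_s b_s \mathbb{E}[\alpha_s] + \sum_r \mathbb{E}[\beta_r]$, hence is at least $(1-1/e)\cdot\textsc{Opt}$ by weak LP duality.

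The main obstacle — exactly as in Theorem~\ref{theo:SBalSOPT}, but now needing care per server — is controlling the randomness: because rewards are stochastic, the load $l_s$ at the moment a given request is assigned is itself a random variable, so ``$\alpha_s$ reaches $w_s(1-1/e)$ when $s$ fills up'' is only an asymptotic, in-expectation statement. I would handle this the same way the excerpt's analysis of the uniform case does: discretize the load interval $[0, b_s]$ into steps of size $p_{\min}$ or analyze the expected increment of $\alpha_s$ per successful match as $b_s \to \infty$ so that the per-server Riemann-sum approximation to $\int_0^1 f(z)\,dz = 1 - 1/e$ has error $o(b_s)$, which is where the hypothesis $b_{\min} \to \infty$ enters (the error term must be uniform in the sense that it vanishes relative to $b_s \ge b_{\min}$). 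The key point I would emphasize is that the analysis decomposes \emph{per server}: each server's dual variable and its fill-up dynamics are analyzed in isolation using only its own capacity $b_s$ and weight $w_s$, so heterogeneity of capacities and weights causes no new interaction — the weight $w_s$ factors out of every inequality involving $s$, and replacing $b$ by $b_s$ throughout and load by normalized load $l_s/b_s$ is the only structural change. Consequently the argument goes through verbatim once these substitutions are made and the limit is taken as $b_{\min} \to \infty$, and the stochastic-benchmark claim follows immediately since $\textsc{SOpt}(G) \le \textsc{Opt}(G)$ for every $G$.
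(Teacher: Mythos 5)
Your proposal follows essentially the same route as the paper's Appendix~\ref{app:ext}: rerun the primal-dual analysis of Theorem~\ref{theo:SBalSOPT} with $b$ replaced by $b_s$, the potential function normalized per server (your $f(l_s/b_s)$ is exactly the paper's $f_s(l_s)$), all dual updates scaled by $w_s$, and the one genuinely cross-server step --- lower-bounding $\hat{y}(r)$ when $s$ still has capacity --- salvaged by observing that the algorithm's selection rule maximizes precisely the quantity $w_{s'}p_{s',r}(1-f_{s'}(l_{s'}))$ defining $\hat{y}(r)$. This matches the paper's argument, so no further comparison is needed.
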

In the remainder of this section we establish Theorem~\ref{theo:SBalSOPT}. 
Theorem~\ref{theo:ext} will be shown in Appendix~\ref{app:ext}. 

First, consider the standard (relaxed) primal and dual LP of online $b$-matching with stochastic rewards given below. 
Here, the primal variable $m(s,r)$ for each edge indicates the \emph{probability} that $e=\{s,r\}\in E$ is chosen 
by the algorithm (irrespective of if the assignment succeeded or not). 
Note that these probabilities are generally different from $0$ or $1$, even for 
deterministic algorithms, as the randomness of assignments succeeding can influence the decisions of the algorithm.
In the primal LP, the first set of constraints ensures that the server capacities are observed in expectation. 
The second set of constraints ensures that each request is assigned to at most one server. 
Note that the primal program also corresponds to the budgeted allocation problem (cf.\ the discussion on benchmarks in the introduction). As 
mentioned before, one can show that the expected number of matches of any online algorithm is upper bounded 
by the optimal solution to the primal program, which is exactly the non-stochastic benchmark. 
\begin{align*}
\textbf{P: }\text{max} \ &\sum_{\{s,r\}\in E}p_{s,r} \cdot m(s,r) 
& \textbf{D: }\text{min} \ &b\cdot\sum_{s\in S}  x(s) + \sum_{r\in R} y(r) \\
\text{s.t.} \ &\sum_{r:\{s,r\}\in E} p_{s,r} \cdot m(s,r) \leq b, \ (\forall s \in S)
&\text{s.t.} \ & p_{s,r} \cdot x(s)+y(r) \geq p_{s,r} , \ (\forall \{s,r\}\in E) \\
&\sum_{s:\{s,r\}\in E} m(s,r) \leq 1, \ (\forall r\in R)
& &x(s),\,y(r)\geq 0, \ (\forall s\in S, \forall r\in R)\, \\
& m(s,r) \geq 0, \ (\forall \{s,r\}\in E) \,
\end{align*}

Now, we can explain the primal-dual framework.
We construct a solution to the primal program by setting the primal variables according to \textsc{StochasticBalance}. 
The value of the primal solution is thus the expected number of matches generated by \textsc{StochasticBalance}. 
In parallel, we create a feasible solution to the dual program with the value of the primal solution multiplied by a constant $1/c$.
It then follows by weak duality that \textsc{StochasticBalance} is $c$ competitive against the optimal solution to the primal 
program, i.e. the non-stochastic benchmark.  

More specifically, depending on the realization of edge successes and failures, we maintain a (random) assignment 
to the variables $\hat{m}(s,r)$, $\hat{x}(s)$ and $\hat{y}(r)$. 
Let the random variables $P$ and $D$ denote the value of the random primal and dual solutions, respectively. We denote a change in the values $P$ and $D$ by $\Delta P$ and $\Delta D$, respectively. 
Moreover, we define $m(s,r) := \EX[\hat{m}(s,r)]$, 
$x(s) := \EX[\hat{x}(s)]$ and $y(r) := \EX[\hat{y}(r)]$, where expectation is taken over the randomness of edges succeeding or failing. 
Initially, all variables are 0. Whenever \textsc{StochasticBalance} assigns a request $r$ to a server $s$, 
we set $\hat{m}(s,r)$ to 1. Moreover, we increase $\hat{x}(s)$ by and set $\hat{y}(r)$ to 
\[
    \Delta \hat{x}(s) = p_{s,r}\cdot\frac{f(l_s)}{b\cdot c} \quad \text{and} \quad \hat{y}(r) = p_{s,r}\cdot \frac{1-f(l_s)}{c} \,,
\]
respectively. 
Here, $l_s$ denotes the load of server $s$ \emph{before} the assignment.  Note that it always holds that $\Delta P/c = p_{s,r}/c= \Delta D$. Thus, summing over all assignments of 
the algorithm and taking expectation yields $\EX[P]/c = \EX[D]$. 
This is exactly the desired relation between the values of the primal and dual solutions to show $c$-competitiveness. 
All that is left to show now is that $f$ and $c$ can be chosen in a way that guarantees the feasibility of the dual solution. 

Since we are primarily interested in the best possible competitive ratio $c$ as $b\rightarrow \infty$, we will refrain 
from optimizing $f$ and $c$ for an arbitrary value of $b$. Even for vanishing probabilities, this involves solving integral equations using 
numeric methods \cite{HZ, HJSS}. 
Instead, we use the simple combination of $c=1-1/e$ and
\[
f(x) = \begin{cases}
    e^{\frac{x}{b}-1} &  x \leq b \,, \\
    1 &   x > b \,.
\end{cases}  
\]
We argue that this yields a feasible solution in the limit $b\rightarrow \infty$. By the choice of $f$, we always have $\hat{x}(s)\geq 0$ and $\hat{y}(r)\geq 0$. It immediately follows that $x(s) \geq 0$ and $y(r)\geq 0$. 
Hence we only need to show that 
our dual solution satisfies the first set of constraints. 

The following lemma allows us to lower bound $\hat{x}(s)$ by only considering
the load of a server $s$. This will simplify the analysis later on, since 
we then do not need to consider the exact set of edges assigned to $s$ anymore. 

\begin{lemma}
\label{lem:x(s)Integral}
    For any load $l_s$ of $s$, where $\{p_1, p_2, \ldots, p_j\}$ with $\sum_{i=1}^j p_i = l_s$ are the individual probabilities of the edges assigned to $s$, it holds that 
    \[
    \hat{x}(s) = \frac{1}{b\cdot c} \cdot \sum_{i=0}^{j-1} f \left(\sum_{l=1}^i p_l\right)\cdot p_{i+1} \geq \frac{1}{b\cdot c}\int_{-1}^{l_s-1} f(x) \ \mathrm{d}x \,.
    \]
\end{lemma}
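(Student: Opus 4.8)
The plan is to establish the stated identity first and then the integral lower bound. For the identity, I would simply track how $\hat x(s)$ evolves over the execution of \textsc{StochasticBalance}. The variable $\hat x(s)$ starts at $0$ and, by the algorithm's update rule, increases by $p_{s,r}\,f(l_s)/(b\cdot c)$ precisely when a request $r$ is assigned to $s$, where $l_s$ is the load of $s$ \emph{before} that assignment. If $p_1,\dots,p_j$ are the probabilities of the edges assigned to $s$, listed in the order in which the assignments occur, then the load just before the $(i+1)$-st of these assignments equals $\sum_{l=1}^{i}p_l$. Summing the increments over $i=0,\dots,j-1$ therefore gives $\hat x(s)=\frac{1}{b\cdot c}\sum_{i=0}^{j-1}f\!\bigl(\sum_{l=1}^{i}p_l\bigr)\,p_{i+1}$, which is the claimed closed form; this holds pathwise, i.e.\ for every fixed realization of successes and failures, which is all the lemma asserts since the set and order of edges assigned to $s$ is determined by that realization.

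For the inequality, abbreviate $L_i:=\sum_{l=1}^{i}p_l$, so that $L_0=0$, $L_j=l_s$, and $L_{i+1}-L_i=p_{i+1}$. After the change of variables $x\mapsto x-1$ the target integral becomes $\int_{-1}^{l_s-1}f(x)\,\mathrm{d}x=\int_{0}^{l_s}f(x-1)\,\mathrm{d}x=\sum_{i=0}^{j-1}\int_{L_i}^{L_{i+1}}f(x-1)\,\mathrm{d}x$, so it suffices to show the per-block bound $\int_{L_i}^{L_{i+1}}f(x-1)\,\mathrm{d}x\le f(L_i)\,p_{i+1}$ for each $i$. Here I would use that every edge probability is at most $1$: for $x\in[L_i,L_{i+1}]$ we have $x-1\le L_{i+1}-1=L_i+p_{i+1}-1\le L_i$, and since $f$ is non-decreasing this gives $f(x-1)\le f(L_i)$ throughout the block. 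Integrating over an interval of length $p_{i+1}$ yields $\int_{L_i}^{L_{i+1}}f(x-1)\,\mathrm{d}x\le f(L_i)\,p_{i+1}$, and summing over $i$ completes the argument.

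The only point that needs a moment's thought — the main obstacle, such as it is — is understanding why the integral is shifted by $1$ rather than running from $0$ to $l_s$: the shift compensates for the fact that $f$ is evaluated at the load \emph{before} each assignment, and the bound $p_{i+1}\le 1$ on a single edge probability is precisely what forces $f(x-1)\le f(L_i)$ over the whole block $[L_i,L_{i+1}]$, not merely at its left endpoint. The remaining ingredients — telescoping the sum, monotonicity of $f$, and the substitution — are routine. One may also note that the bound becomes essentially tight as the $p_i$ tend to $0$, which is consistent with its later use in the $b\to\infty$ regime.
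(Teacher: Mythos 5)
Your proof is correct and follows essentially the same route as the paper: the identity comes straight from the update rule, and the inequality is the paper's ``shift every rectangle one unit to the left'' argument (using $p_{i+1}\le 1$ and the monotonicity of $f$), which you have merely rewritten analytically via the substitution $x\mapsto x-1$ instead of geometrically.
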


\begin{proof}
    The first equality follows from our update rule for $\hat{x}(s)$. Recall that $f$ is a non-decreasing function.
    Consider Figure \ref{fig:integral} for a visualization of the sum above. The individual summands can be interpreted as areas of rectangles with 
    height $f\left(\sum_{l=1}^i p_l\right)$ and width $p_{i+1}$. 
    If we fix the left edge of the rectangle with width $p_{i+1}$ at $x=\sum_{l=1}^i p_l$, we see that all rectangles are positioned under the graph of $f$. 
    Therefore, it holds that 
    \[
        \int_{0}^{l_s} f(x) \ \mathrm{d}x \geq \sum_{i=0}^{j-1} f \left(\sum_{l=1}^i p_l\right)\cdot p_{i+1} \,.
    \]
    However, we will need a lower bound of $\hat{x}(s)$ for the subsequent analysis. 
    For this, we move all rectangles one unit to the left. Since the width of a 
    rectangle is at most $1$ and $f$ is non-decreasing, it now holds that all rectangles are positioned above the graph of $f$. The lemma follows. 
\end{proof}

\begin{figure}
    \centering
    \begin{tikzpicture}
  \draw[->] (-1, 0) -- (7, 0) node[right] {$x$};
  \draw[->] (0, -0.5) -- (0, 3) node[above] {$y$};
  \draw[domain=-1:7, smooth, variable=\x, thick] plot ({\x}, {(2.72)^(\x/7)}) node[above] {$f(x)$};
  \filldraw[fill=gray!50] (0,0) rectangle ++(1,1);
  \filldraw[fill=gray!50] (1,0) rectangle ++(0.6,{2.72^(1/7)});
  \filldraw[fill=gray!50] (1.6,0) rectangle ++(0.9,{2.72^(1.6/7)});

  \filldraw[fill=gray!50] (5.5,0) rectangle ++(0.5,{2.72^(5.5/7)});
  \draw[dashed] (6,-0.5) node[below right] {$l_s$} -- (6,3) ;

  \node at (4,0.7) {$\dots$} ;
  \node at (-0.25,-0.25) {$0$} ;

  \draw [decorate,decoration={brace,amplitude=6pt},xshift=0pt,yshift=-2pt]
(1,0) -- (0,0) node [black,midway,yshift=-0.5cm] {$p_1$};
\draw [decorate,decoration={brace,amplitude=6pt},xshift=0pt,yshift=-2pt]
(1.6,0) -- (1,0) node [black,midway,yshift=-0.5cm] {$p_2$};
\draw [decorate,decoration={brace,amplitude=6pt},xshift=0pt,yshift=-2pt]
(2.5,0) -- (1.6,0) node [black,midway,yshift=-0.5cm] {$p_3$};
\draw [decorate,decoration={brace,amplitude=6pt},xshift=0pt,yshift=-2pt]
(6,0) -- (5.5,0) node [black,midway,yshift=-0.5cm] {$p_j$};
\end{tikzpicture}
    \caption{The area of the gray rectangles is proportional to the value of $\hat{x}(s)$ when $s$ is assigned edges with probabilities $p_1, p_2, \ldots, p_j$.}
    \label{fig:integral}
\end{figure}
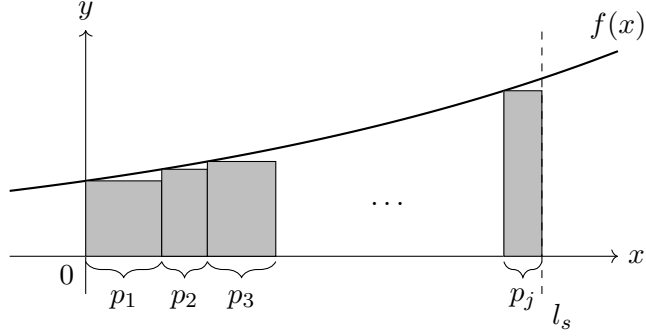

Let $E_s$ denote the set of edges incident to $s$.
Similar to the other primal-dual analyses of randomized online matching algorithms, we will focus on a single server $s$ and fix 
the randomness of all other servers. 
More precisely, we fix the outcome, i.e. success or failure, of every edge in $E\setminus E_s$. Formally, we define the binary vector $Z$ that has 
one entry $Z(e)$ for every edge $e\in E$. $Z(e) = 1$ then means that the 
edge $e$ succeeds when it is chosen by the algorithm. 
We use $Z_{-s}$ and $Z_s$ to denote the parts of $Z$ that contain all the entries for edges $E\setminus E_s$ and $E_s$, respectively. 
We will show that 
 \[
 \EX_{Z_s}\left[p_{s,r} \cdot \hat{x}(s) + \hat{y}(r) \mid Z_{-s} \right] \geq p_{s,r} \cdot (1-\varepsilon) \,,
 \]
 with $\varepsilon \rightarrow 0$ as $b\rightarrow \infty$. 
 Taking expectation over $Z_{-s}$ will then show the feasibility of the dual solution for large server capacities. 
 
\begin{lemma}
\label{lem:IdenticalExecutions}
    For any realization of the random outcomes $Z_{-s}$ of all edges in $E\setminus E_s$, 
    the assignment produced by \textsc{StochasticBalance} for any realization $Z_s$ is identical to the assignment where all edges 
    in $E_s$ are unsuccessful, i.e. $Z_s = \Vec{0}$, up until the 
    point in time when $s$ has no capacity left.
\end{lemma}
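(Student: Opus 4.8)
The plan is to induct on the sequence of arriving requests, comparing the execution of \textsc{StochasticBalance} under an arbitrary realization $Z_s$ with its execution under the all-failure realization $Z_s=\vec 0$, where the fixed $Z_{-s}$ is the same in both. I would maintain the invariant that, after processing a given prefix of requests, the two executions have made exactly the same assignment decisions \emph{and} server $s$ has not yet been matched $b$ times in the $Z_s$-execution. The base case (empty prefix) is trivial, so the substance is the inductive step.

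The structural heart of the argument is that, on any request $r$, the decision of \textsc{StochasticBalance} depends on the past only through (i) the current loads $l_{s'}$ of the servers and (ii) the set $N(r)$ of neighbours of $r$ that still have remaining capacity. For (i), recall that by definition $l_{s'}$ is the sum of the edge probabilities of the requests \emph{assigned} to $s'$, which is insensitive to which of those assignments succeeded; hence, under the invariant, all loads coincide in the two executions. For (ii), a neighbour $s'$ of $r$ lies in $N(r)$ iff the number of successes at $s'$ so far is strictly below $b$. If $s'\neq s$, this count is a function of $Z_{-s}$ restricted to the edges at $s'$ together with the set of requests assigned to $s'$; since $Z_{-s}$ is fixed and the assignment histories agree, $s'$ has the same remaining capacity in both executions. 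For $s$ itself, in the $Z_s=\vec 0$ execution no edge of $s$ ever succeeds, so $s$ keeps its full capacity $b$; in the $Z_s$-execution, as long as $s$ has not been matched $b$ times (which the invariant guarantees for the current prefix) $s$ still has at least one free slot. Thus $s$ is available in one execution precisely when it is available in the other, and therefore $N(r)$ is the same set in both.

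Since both $N(r)$ and all the loads agree, the rule $\argmax\{p_{s',r}(1-f(l_{s'})) : s'\in N(r)\}$ — together with a fixed tie-breaking rule that does not consult $Z$ — selects the same server in both executions (and both leave $r$ unassigned exactly when $N(r)=\emptyset$), which re-establishes the first half of the invariant. If the newly assigned edge is incident to $s$ and its success under $Z_s$ is the $b$-th success at $s$, then $s$ becomes saturated: the assignment just made is still common to both executions, but the invariant may fail starting from the next request, and this is precisely the stopping point referred to in the statement. The only point that needs care is this bookkeeping around saturation — phrasing the invariant so that it still covers the assignment consuming $s$'s last unit of capacity and terminates only afterwards; everything else is a direct unwinding of the definitions of load, remaining capacity, and the algorithm's selection rule.
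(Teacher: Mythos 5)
Your proposal is correct and follows essentially the same route as the paper's proof: induction over arriving requests with the invariant that both executions agree and $s$ still has capacity, observing that loads depend only on assignments (not successes) and that availability of servers other than $s$ is determined by the shared $Z_{-s}$. Your version is somewhat more explicit about the tie-breaking rule and the bookkeeping at the moment $s$ saturates, but these are refinements of the same argument, not a different one.
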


\begin{proof}
    This follows from the fact that \textsc{StochasticBalance} is indifferent to how many matches the server $s$ has collected so far.
    It only considers the load of $s$ as long as $s$ has remaining capacity. 
    Formally, we prove this by induction over the number of requests assigned by \textsc{StochasticBalance}. Initially, the statement 
    is trivially true. Thus, consider that the assignments are identical so 
    far and that $s$ still has remaining capacity. It immediately follows that the loads of all servers are also identical in both cases. 
    Moreover, the set of servers with remaining capacities are also identical. To see this, recall that only the outcomes of edges in $E_s$ are 
    different between the two executions. This implies that all servers but $s$ have the same number of successful matches in both cases. 
    Thus, as long as $s$ has remaining capacity for any $Z_s$, all properties relevant to \textsc{StochasticBalance} 
    are identical to the execution with $Z_s = \Vec{0}$. Since \textsc{StochasticBalance} is a deterministic algorithm, 
    it assigns the next incoming request to the same server in both cases. 
\end{proof} 

\begin{lemma}
\label{lem:AssignmentsOfS}
    Given the random outcomes $Z_{-s}$ of all edges in $E\setminus E_s$, let $R_s = \{r_1, r_2, \ldots, r_k\}$ denote the set of requests assigned to 
    $s$ ordered by arrival times if $Z_s = \Vec{0}$. Let
    the random variable $L_s$ denote the index of the request (in $R_s$) that results in the $b$-th success of $s$. 
    If there are less than $b$ matches in total, we set $L_s = k+1$. \textsc{StochasticBalance} assigns the requests $\{r_1,r_2,\ldots, r_j\}$, $j<k$, to $s$ if and only 
    if $L_s = j$. Similarly, $s$ is assigned all requests from $R_s$ if and only if $L_s \geq k$. 
\end{lemma}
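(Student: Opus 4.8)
The plan is to derive everything from Lemma~\ref{lem:IdenticalExecutions}. Fix the outcomes $Z_{-s}$. By that lemma, for \emph{every} realization of $Z_s$ the global assignment produced by \textsc{StochasticBalance} agrees with the one for $Z_s=\Vec{0}$ as long as $s$ still has free capacity. In the reference execution $Z_s=\Vec{0}$ the server $s$ never records a success, so it retains its full capacity $b$ throughout and the requests handed to it are exactly $r_1,r_2,\dots,r_k$ in this order. Hence, for an arbitrary $Z_s$, the set of requests assigned to $s$ is a \emph{prefix} $\{r_1,\dots,r_t\}$ of $R_s$: as long as $s$ has capacity it receives $r_1,r_2,\dots$ in order by Lemma~\ref{lem:IdenticalExecutions}, and once $s$ is saturated it is never chosen again. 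It remains to identify the cut-off index $t$.

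Here I would use the rule from the problem definition that a server's capacity is consumed only by successful matches (a failed assignment leaves the remaining capacity untouched), so $s$ gets saturated precisely at its $b$-th success. Because the success bits of $r_1,r_2,\dots$ are exactly the entries of $Z_s$ along the corresponding edges, an induction on the arrival order shows that $s$ still has a free unit when each of $r_1,\dots,r_{L_s}$ arrives: having received only $r_1,\dots,r_{i-1}$ with $i\le L_s$, fewer than $b$ of them have succeeded, so by Lemma~\ref{lem:IdenticalExecutions} the request $r_i$ (which lies in $R_s$) is again assigned to $s$. Thus $r_{L_s}$ is assigned to $s$, it succeeds by the definition of $L_s$, and from then on $s$ is saturated and receives nothing more; this gives $t=L_s$ when $L_s\le k$. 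If instead $L_s=k+1$, then $s$ never attains $b$ successes, is never saturated, and therefore receives all of $R_s$, i.e.\ $t=k$. In either case $t=\min\{L_s,k\}$, so \textsc{StochasticBalance} assigns to $s$ exactly $\{r_1,\dots,r_{\min\{L_s,k\}}\}$.

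The two claimed equivalences then follow. Since $1\le L_s\le k+1$, the set $\{r_1,\dots,r_{\min\{L_s,k\}}\}$ equals the proper prefix $\{r_1,\dots,r_j\}$ with $j<k$ iff $\min\{L_s,k\}=j$, which forces $L_s=j$; and it equals all of $R_s$ iff $\min\{L_s,k\}=k$, i.e.\ iff $L_s\ge k$. I expect the only delicate point to be the inductive step: it invokes Lemma~\ref{lem:IdenticalExecutions} at each arrival while simultaneously needing that $s$ still has capacity --- exactly the hypothesis under which the lemma applies --- so the induction must carry along the invariant that the number of $s$'s successes so far equals the number of successful edges among the already-assigned prefix and is still below $b$; the observation that saturation coincides with the $b$-th success is what keeps this argument from being circular.
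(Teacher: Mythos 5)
Your proof is correct and follows the same route as the paper: both derive the statement directly from Lemma~\ref{lem:IdenticalExecutions}, observing that $s$ receives a prefix of $R_s$ until its $b$-th success saturates it. The paper's version is only two sentences; your added induction and the remark that capacity is consumed only by successes merely make explicit what the paper treats as immediate.
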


\begin{proof}
    This immediately follows from Lemma \ref{lem:IdenticalExecutions}. 
    As long as $s$ has remaining capacity, \textsc{StochasticBalance} assigns all requests as if $Z_s = \Vec{0}$. 
    Hence $s$ is only assigned requests in $R_s$. Moreover, if the assignment of $r_j$ produces the $b$-th success of $s$, 
    $s$ is obviously assigned no further requests since it has no capacity left. 
\end{proof}
    
With the help of Lemma \ref{lem:AssignmentsOfS}, we can now lower bound the expected values of $\hat{x}(s)$ and $\hat{y}(r)$ for all neighbors $r$ of $s$, 
given $Z_{-s}$. To simplify the notation, we define $p_j := p_{s,r_j}$, $l_s^j := \sum_{i=0}^j p_i$ and
\[
    I_j := \frac{1}{b} \cdot \int_{-1}^{l_s^j-1} f(x) \ \mathrm{d} x \,,
\]
for all $j\in [k]$. 
\begin{lemma}
    \label{lem:x(s)}
    For any realization of the random outcomes $Z_{-s}$ of all edges in $E\setminus E_s$ and the corresponding set of requests $R_s$, it holds that 
    \[
    \EX_{Z_s}\left[\hat{x}(s) \mid Z_{-s} \right] = \frac{1}{c} \cdot \left(\sum_{j=0}^{k-1} \left(I_{j+1}-I_j\right) \cdot \left(1-\Pr\left[L_s  \leq j\right] \right) \right) \,,
    \]
    where we define $I_0 := 0$. 
\end{lemma}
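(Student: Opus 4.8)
The plan is to prove the identity directly in terms of the given integrals $I_j$, without introducing any auxiliary sums and hence without any lower-bounding step. First I would fix the realization $Z_{-s}$ of all edges outside $E_s$. By Lemma~\ref{lem:IdenticalExecutions} this freezes the candidate sequence $R_s=(r_1,\dots,r_k)$ together with the loads $l_s^0,\dots,l_s^k$, so that $p_1,\dots,p_k$, all $l_s^j$ and consequently all $I_j$ become deterministic constants; the only remaining randomness in $Z_s$ enters through the single integer $L_s$. By Lemma~\ref{lem:AssignmentsOfS}, the set of requests actually charged to $\hat x(s)$ is exactly the prefix $\{r_1,\dots,r_{\min\{L_s,k\}}\}$. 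Thus, conditioned on $Z_{-s}$, the random variable $\hat x(s)$ is a deterministic function of the single quantity $M:=\min\{L_s,k\}$, and the whole claim reduces to computing $\EX_{Z_s}[\hat x(s)\mid Z_{-s}]$ as an expectation over $M$.

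The key step is to express this deterministic function directly through the $I_j$. By the update rule $\Delta\hat x(s)=p_{s,r}\,f(l_s)/(bc)$, assigning the $(j{+}1)$-st prefix request contributes exactly $\tfrac1{bc}f(l_s^j)\,p_{j+1}=\tfrac1c(I_{j+1}-I_j)$ to $\hat x(s)$, where the identification of the per-edge increment with $I_{j+1}-I_j$ is precisely the content of Lemma~\ref{lem:x(s)Integral}. Therefore, whenever exactly $m$ prefix requests are assigned, telescoping with $I_0=0$ gives $\hat x(s)=\tfrac1c\sum_{j=0}^{m-1}(I_{j+1}-I_j)=\tfrac1c\,I_m$. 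Feeding in $m=M$ yields the exact identity $\hat x(s)=\tfrac1c\,I_M$, so that $\EX_{Z_s}[\hat x(s)\mid Z_{-s}]=\tfrac1c\,\EX[I_M\mid Z_{-s}]$. Because this is an equality at the level of the random variable, no inequality is incurred and the two regimes $L_s<k$ and $L_s\ge k$ are already handled uniformly by the cap in $M$.

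It then remains to evaluate $\EX[I_M]$ by summation by parts, which I expect to be the only place where care is genuinely needed. I would write $I_M=\sum_{j=0}^{k-1}(I_{j+1}-I_j)\,\mathbf{1}[M>j]$, take expectations termwise by linearity, and observe that for every $j\le k-1$ the cap collapses the two regimes into a single event, namely $\{M>j\}=\{\min\{L_s,k\}>j\}=\{L_s>j\}$: if $L_s\ge k$ then $M=k>j$, and otherwise $M=L_s$, so in both cases $M>j$ is equivalent to $L_s>j$. Hence $\Pr[M>j]=1-\Pr[L_s\le j]$, giving $\EX[I_M]=\sum_{j=0}^{k-1}(I_{j+1}-I_j)(1-\Pr[L_s\le j])$ and, after the factor $\tfrac1c$, exactly the claimed equality. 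The main obstacle is this bookkeeping — one must check that the case $L_s\ge k$, in which all of $R_s$ is assigned and $\hat x(s)=\tfrac1c I_k$, is folded into the same prefix-indicator sum as the cases $L_s=j<k$, which is guaranteed by $\{M>j\}=\{L_s>j\}$ for $j\le k-1$; there is thus no separate boundary term, and crucially no step weakens an equality to a bound, so the conclusion is the stated identity rather than a one-sided inequality.
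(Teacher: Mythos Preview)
Your overall architecture is sound, but there is a genuine gap at the step you flag as ``the key step.'' You assert that the $(j{+}1)$-st assignment contributes \emph{exactly}
\[
\frac{1}{bc}\,f(l_s^j)\,p_{j+1}=\frac{1}{c}\,(I_{j+1}-I_j),
\]
and you cite Lemma~\ref{lem:x(s)Integral} for this identification. That lemma does not say this. The per-edge increment is the rectangle $\tfrac1{bc}f(l_s^j)\,p_{j+1}$, while
\[
I_{j+1}-I_j=\frac{1}{b}\int_{l_s^j-1}^{l_s^{j+1}-1}f(x)\,\mathrm{d}x,
\]
and for a strictly increasing $f$ these are \emph{not} equal; since the integration interval lies entirely to the left of $l_s^j$ (its right endpoint is $l_s^j+p_{j+1}-1\le l_s^j$), one only has $I_{j+1}-I_j\le \tfrac1b f(l_s^j)\,p_{j+1}$. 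Lemma~\ref{lem:x(s)Integral} encodes precisely this one-sided comparison: it gives $\hat x(s)\ge I_j/c$, not $\hat x(s)=I_j/c$. Consequently your identity $\hat x(s)=\tfrac1c I_M$ fails, and with it the claim that ``no inequality is incurred.''

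If you replace the false equality by the correct inequality $\hat x(s)\ge \tfrac1c I_M$, your indicator decomposition $I_M=\sum_{j=0}^{k-1}(I_{j+1}-I_j)\mathbf{1}[M>j]$ together with $\{M>j\}=\{L_s>j\}$ for $j\le k-1$ goes through verbatim and yields the lower bound
\[
\EX_{Z_s}\bigl[\hat x(s)\,\big|\,Z_{-s}\bigr]\ \ge\ \frac{1}{c}\sum_{j=0}^{k-1}(I_{j+1}-I_j)\bigl(1-\Pr[L_s\le j]\bigr).
\]
This is exactly what the paper's own proof establishes: it also starts from ``$\hat x(s)$ is \emph{at least} $I_j/c$'' and then rearranges via summation by parts. (The ``$=$'' in the lemma statement is thus a slight overstatement relative to what is proved and used; only ``$\ge$'' is needed downstream in Lemma~\ref{lem:condSatisfiability}.) Your indicator-sum route and the paper's Abel-summation route are equivalent rearrangements of the same telescoping sum; your version is arguably cleaner because it handles the boundary case $L_s\ge k$ uniformly via $M=\min\{L_s,k\}$ without a separate term, but neither approach can upgrade the inequality from Lemma~\ref{lem:x(s)Integral} to an identity.
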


\begin{proof}
    By Lemma \ref{lem:x(s)Integral}, the value of $\hat{x}(s)$ is at least $I_j/c$ when the requests $\{r_1,r_2,\ldots,r_j\}$ are assigned to $s$. 
    By Lemma \ref{lem:AssignmentsOfS}, it further follows that 
    \[
        \EX_{Z_s}\left[\hat{x}(s) \mid Z_{-s} \right] = \frac{1}{c} \cdot \left(\sum_{j=1}^{k-1} \Pr\left[L_s = j\right] \cdot I_j + \Pr\left[L_s \geq k\right] \cdot I_k \right)\,.
    \]
    Next, we use \emph{summation by parts} to rearrange the sum in the expression above. 

    \begin{fact}[Summation by Parts]
    \label{fac:sbp}
        Let $\left( f_n \right)$ and $\left( g_n \right)$ be two sequences. It then holds that 
        \[
        \sum_{k=m}^n \left(f_{k+1}-f_k\right)g_k= \left(f_{n+1}g_{n+1}-f_mg_m \right) - \sum_{k=m}^n f_{k+1}\left(g_{k+1} - g_k\right) \,.
        \]
    \end{fact}
    Note that $\Pr\left[L_s = j\right] = \Pr\left[L_s \leq j\right]-\Pr\left[L_s \leq j-1\right]$. It follows that 
    \begin{align*}
        \sum_{j=1}^{k-1} \Pr\left[L_s = j\right] \cdot I_j &= \sum_{j=1}^{k-1} \left(\Pr\left[L_s \leq j\right]-\Pr\left[L_s \leq j-1\right]\right) \cdot I_j \\
        &= \Pr\left[L_s \leq k-1\right] \cdot I_k - \Pr\left[L_s \leq 0 \right] \cdot I_1 
         - \sum_{j=1}^{k-1} \Pr\left[L_s \leq j\right] \cdot \left(I_{j+1}-I_j \right) \,,
    \end{align*}
    where we used Fact \ref{fac:sbp} with $f_j=\Pr\left[ L_s \leq j-1\right]$ and $g_j = I_j$. Observe that $\Pr\left[L_s \leq 0 \right] = 0$. 
    Putting everything together yields
    \[
    \EX_{Z_s}\left[\hat{x}(s) \mid Z_{-s} \right] = \frac{1}{c} \cdot \left(I_k - \sum_{j=1}^{k-1} \Pr\left[L_s \leq j\right] \cdot \left(I_{j+1}-I_j \right) \right) \,.
    \]
    Finally, we use $I_k = \sum_{j=0}^{k-1} \left(I_{j+1}-I_j\right)$ and let the sum above start at $j=0$. Again, as 
    $\Pr\left[L_s \leq 0 \right] = 0$, this does not change the result. 
    This finishes the proof.
\end{proof}

\begin{lemma}
    \label{lem:y(r)}
    Consider any edge $\{s,r\}\in E$.
    For any realization of the random outcomes $Z_{-s}$ of all edges in $E\setminus E_s$ and the corresponding set of requests $R_s$, it holds that 
    \[
    \EX_{Z_s}\left[\hat{y}(r) \mid Z_{-s} \right] \geq \left(1-\Pr\left[L_s \leq k\right]\right) \cdot \frac{p_{s,r}}{c} \cdot \left( 1- f\left(l_s^k \right) \right) \,.
    \]
\end{lemma}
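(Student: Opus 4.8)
The plan is to bound $\hat y(r)$ case by case on a partition of the probability space according to $L_s$, keeping only the trivial bound $\hat y(r)\ge 0$ on most of it. First I would record two facts about the dual update: whenever a request $r$ is assigned to a server $s'$ we set $\hat y(r)=p_{s',r}(1-f(l_{s'}))/c$ with $l_{s'}$ the load of $s'$ just before the assignment, and $\hat y(r)=0$ if $r$ is left unassigned; since $f\le 1$ this makes $\hat y(r)\ge 0$ always. Second, $1-\Pr[L_s\le k]=\Pr[L_s=k+1]$ is exactly the probability of the event that $s$ is never exhausted during the run: by Lemma~\ref{lem:AssignmentsOfS}, $L_s=k+1$ occurs precisely when fewer than $b$ requests of $R_s$ succeed, in which case $s$ receives all of $R_s$ and never fills up. So it suffices to prove the pointwise bound $\hat y(r)\ge\tfrac{p_{s,r}}{c}\bigl(1-f(l_s^k)\bigr)$ on the event $\{L_s=k+1\}$, and then take conditional expectation over $Z_s$ given $Z_{-s}$.

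For that pointwise bound I would invoke Lemma~\ref{lem:IdenticalExecutions}: on $\{L_s=k+1\}$ the server $s$ is never exhausted, so the entire execution coincides with the reference execution in which all edges incident to $s$ fail; there $R_s=\{r_1,\dots,r_k\}$ is exactly the set of requests routed to $s$, and the load of $s$ is nondecreasing and ends at $l_s^k$. Now split on whether $r\in R_s$. If $r=r_i\in R_s$, then by Lemma~\ref{lem:AssignmentsOfS} (using $L_s=k+1\ge i$) the request $r_i$ is assigned to $s$, and the load of $s$ at that moment is $l_s^{i-1}\le l_s^k$, so $\hat y(r)=\tfrac{p_{s,r}}{c}\bigl(1-f(l_s^{i-1})\bigr)\ge\tfrac{p_{s,r}}{c}\bigl(1-f(l_s^k)\bigr)$ by monotonicity of $f$. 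If $r\notin R_s$, then when $r$ arrives $s$ still has free capacity (it has no successes in the reference execution), hence $s\in N(r)$ and $r$ is assigned to the maximizer $s'\ne s$; the defining inequality of the $\argmax$ gives $p_{s',r}(1-f(l_{s'}))\ge p_{s,r}(1-f(\ell))$, where $\ell\le l_s^k$ is the load of $s$ at that time, so again $\hat y(r)=\tfrac{p_{s',r}}{c}(1-f(l_{s'}))\ge\tfrac{p_{s,r}}{c}\bigl(1-f(l_s^k)\bigr)$, using monotonicity of $f$ once more.

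Putting the two cases together, $\hat y(r)\ge\tfrac{p_{s,r}}{c}\bigl(1-f(l_s^k)\bigr)$ throughout $\{L_s=k+1\}$ and $\hat y(r)\ge 0$ on its complement, so $\EX_{Z_s}[\hat y(r)\mid Z_{-s}]\ge\Pr[L_s=k+1]\cdot\tfrac{p_{s,r}}{c}\bigl(1-f(l_s^k)\bigr)=(1-\Pr[L_s\le k])\cdot\tfrac{p_{s,r}}{c}\bigl(1-f(l_s^k)\bigr)$, as claimed. I expect the one genuinely delicate point to be the load bookkeeping at the instant $r$ is processed: one must check that on the relevant event $s$ is still unexhausted when $r$ arrives, so that Lemmas~\ref{lem:IdenticalExecutions} and~\ref{lem:AssignmentsOfS} apply through the step assigning $r$, and hence that this load is the one from the reference execution and therefore at most $l_s^k$. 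Everything else reduces to monotonicity of $f$ and the nonnegativity of $\hat y(r)$.
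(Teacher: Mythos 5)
Your proposal is correct and follows essentially the same route as the paper: condition on the event $\{L_s=k+1\}$ (i.e., $s$ never exhausts its capacity, which by Lemma~\ref{lem:IdenticalExecutions} makes the run coincide with the reference execution), use the $\argmax$ rule and monotonicity of $f$ to get the pointwise bound $\hat y(r)\ge \tfrac{p_{s,r}}{c}\bigl(1-f(l_s^k)\bigr)$ there, and fall back on $\hat y(r)\ge 0$ on the complement. The only cosmetic difference is that you split explicitly on $r\in R_s$ versus $r\notin R_s$, whereas the paper handles both cases at once via the single inequality $p_{s',r}(1-f(l_{s'}^r))\ge p_{s,r}(1-f(l_s^r))$, which holds regardless of whether $s'=s$.
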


\begin{proof}
    Let $l_s^r$ denote the load of $s$ when $r$ arrived. Recall that $\hat{y}(r)$ is set to $p_{s',r}\cdot\left(1-f\left(l_{s'}^r\right)\right)/c$, 
    when $r$ is assigned to some server $s'$. 
    Since $s$ is never assigned any requests not in $R_s$, 
    we obviously have $l_s^r \leq l_s^k$.
    If $s$ is available throughout the algorithm, \textsc{StochasticBalance} considers $s$ when making its decision for $r$. 
    Hence $r$ is assigned to some server $s'$ with 
    \[
    p_{s',r} \left(1-f\left(l_{s'}^r \right) \right) \geq p_{s,r} \left(1-f\left(l_{s}^r \right) \right) \geq p_{s,r} \left(1-f\left(l_{s}^k\right) \right) \,, 
    \]
    where the last inequality is true because $f$ is non-decreasing.
    By Lemma \ref{lem:IdenticalExecutions}, $s$ is available throughout the algorithm if $s$ has less than $b$ successful matches 
    from the assignments in $R_s$. By definition, this happens with probability $\Pr\left[L_s = k+1\right] = 1-\Pr\left[L_s \leq k\right]$. 
    If $s$ is not available throughout the algorithm, we trivially have $\hat{y}(r) \geq 0$. Combining these two cases finishes the proof. 
\end{proof}

The following lemma lets us simplify the probabilities from Lemmas \ref{lem:x(s)} and \ref{lem:y(r)} if $b\rightarrow \infty$. 
We prove it with the help of Chebyshev's inequality. 

\begin{lemma}
\label{lem:probTo1}
    If the load $l_s$ of $s$ is at most $b-\omega\left(\sqrt{b}\right)$, the probability that $s$ has less than $b$ matches converges to $1$ as $b\rightarrow \infty$.
\end{lemma}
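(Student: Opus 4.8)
The plan is to apply Chebyshev's inequality to the number of successes that $s$ would collect if it were assigned \emph{all} of $R_s$. I fix the outcomes $Z_{-s}$, which determines $R_s = \{r_1, \dots, r_k\}$ together with the edge probabilities $p_1, \dots, p_k$, and I work under the hypothesis $l_s := l_s^k = \sum_{i=1}^k p_i \le b - \omega(\sqrt{b})$. Let $X_i$ be the indicator that $\{s, r_i\}$ succeeds when it is chosen; the $X_i$ are independent $\mathrm{Bernoulli}(p_i)$ variables whose joint law does not depend on $Z_{-s}$, and I set $M := \sum_{i=1}^k X_i$. By the definition of $L_s$ in Lemma~\ref{lem:AssignmentsOfS}, the event ``$s$ ends up with fewer than $b$ matches'' is precisely $\{L_s = k+1\} = \{M < b\}$ (if $M \ge b$ the $b$-th success occurs at some index $\le k$ and $s$ attains exactly $b$ matches). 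Hence it suffices to show $\Pr[M \ge b] \to 0$ as $b \to \infty$.

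Next I bound the first two moments of $M$. Clearly $\EX[M] = \sum_{i=1}^k p_i = l_s$, and since $p_i(1 - p_i) \le p_i$ for each $i$ and the $X_i$ are independent,
\[
  \mathrm{Var}[M] \;=\; \sum_{i=1}^k p_i(1 - p_i) \;\le\; \sum_{i=1}^k p_i \;=\; l_s \;<\; b .
\]
Since $b - l_s \ge \omega(\sqrt{b}) > 0$, Chebyshev's inequality then gives
\[
  \Pr[M \ge b] \;\le\; \Pr\bigl[\,\lvert M - \EX[M]\rvert \ge b - l_s\,\bigr]
  \;\le\; \frac{\mathrm{Var}[M]}{(b - l_s)^2}
  \;<\; \frac{b}{(b - l_s)^2}
  \;=\; \frac{b}{\omega(b)}
  \;\xrightarrow{b \to \infty}\; 0 ,
\]
so $\Pr[M < b] \to 1$. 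The bound is uniform in $Z_{-s}$, so no subsequent averaging over $Z_{-s}$ is required.

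I do not expect any real obstacle here; the statement is essentially a one-line second-moment estimate. The two points to get right are: (i) identifying the correct random variable, namely the total number of successes among \emph{all} of $R_s$, whose distribution is independent of $Z_s$ once $Z_{-s}$ is fixed, so that ``$s$ has $<b$ matches'' coincides with $\{M<b\}$; and (ii) the elementary inequality $\mathrm{Var}[M] \le \EX[M]$ for a sum of independent indicators, which is exactly what makes the Chebyshev bound decay. With a gap of $b - l_s = \omega(\sqrt{b})$ between load and capacity, the tail is $b/\omega(b) = o(1)$, which is the claim; a smaller slack (e.g.\ $\Theta(\sqrt b)$) would only give a constant bound, so the $\omega(\sqrt b)$ hypothesis is tight for this argument.
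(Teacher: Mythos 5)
Your proof is correct and follows essentially the same route as the paper: identify the number of successes as a Poisson-binomial random variable with mean $l_s$, bound its variance by $l_s < b$, and apply Chebyshev's inequality with deviation $b - l_s = \omega(\sqrt{b})$. Your variance bound $\sum_i p_i(1-p_i) \le \sum_i p_i$ is in fact slightly more direct than the paper's appeal to the fact that the Poisson-binomial variance is maximized when all $p_i$ are equal, but the argument is otherwise identical.
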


\begin{proof}
    Let $\{p_1, p_2, \ldots, p_j\}$ with $\sum_{i=1}^j p_i = l_s$ be the individual probabilities of the edges assigned to $s$.
    Let $X$ be the number of heads in a series of $j$ coin tosses, where the $i$-th coin toss has probability $p_i$ of landing on heads. 
    It holds that the probability that $s$ has less than $b$ matches is exactly $\Pr[X < b]$.
    Note that $X$ has the Poisson binomial distribution with $\EX[X] = l_s$. 
    It is known that the variance of $X$ is maximized if all $p_i = l_s/j$ are equal. 
    This results in a binomial distribution with variance 
    \[
    j\cdot \frac{l_s}{j} \cdot \left(1-\frac{l_s}{j} \right) = l_s \left(1-\frac{l_s}{j}\right) < l_s \,.
    \]
    Using Chebyshev's inequality, we get that 
    \[
        \Pr[X \geq b] \leq \Pr\left[|X - l_s| \geq b-l_s\right] < \frac{l_s}{\left(b-l_s\right)^2} \,. 
    \]
    The function $g(x):= x/(b-x)^2$ is increasing for $x<b$. Therefore, it follows for $l_s \leq b-\omega\left(\sqrt{b}\right)$ 
    \[
        \Pr[X \geq b] < \frac{b-\omega\left(\sqrt{b}\right)}{\left(\omega\left(\sqrt{b}\right)\right)^2} < \frac{b}{\omega\left(b\right)} \xrightarrow{b \rightarrow\infty} 0\,. 
    \]
    This implies $\Pr[X<b] = 1-\Pr[X \geq b] \rightarrow 1$ for $b\rightarrow \infty$.
\end{proof}

Finally, we can combine the previous lemmas to show that our constructed dual solution is feasible in the limit $b\rightarrow \infty$.

\begin{lemma}
\label{lem:condSatisfiability}
    For any realization $Z_{-s}$ of all edges in $E\setminus E_s$, it holds for any edge $\{s,r\}\in E$ 
    \begin{equation*}
 \EX_{Z_s}\left[p_{s,r} \cdot \hat{x}(s) + \hat{y}(r) \mid Z_{-s} \right] \geq p_{s,r} \cdot (1-\varepsilon) \,,
    \end{equation*}
    with $\varepsilon \rightarrow 0$ as $b\rightarrow \infty$. 
\end{lemma}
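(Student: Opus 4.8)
The plan is to combine Lemmas~\ref{lem:x(s)} and~\ref{lem:y(r)} and then argue that, for $b$ large, the dominant contribution to $\EX_{Z_s}[p_{s,r}\hat x(s)+\hat y(r)\mid Z_{-s}]$ comes from the integral $I_j$ evaluated at loads close to $b$, where $f$ is close to $1$, plus the leftover $\hat y(r)$ term, and that the two pieces fit together exactly as in the classical deterministic $b$-matching / water-filling analysis. Concretely, fix $Z_{-s}$ and let $R_s=\{r_1,\dots,r_k\}$, $p_j:=p_{s,r_j}$, $l_s^j=\sum_{i\le j}p_i$, and $I_j=\tfrac1b\int_{-1}^{l_s^j-1}f(x)\,\mathrm dx$ as in the text. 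By Lemma~\ref{lem:x(s)}, $c\cdot\EX_{Z_s}[\hat x(s)\mid Z_{-s}]=\sum_{j=0}^{k-1}(I_{j+1}-I_j)(1-\Pr[L_s\le j])$, and by Lemma~\ref{lem:y(r)}, $c\cdot\EX_{Z_s}[\hat y(r)\mid Z_{-s}]\ge(1-\Pr[L_s\le k])\,p_{s,r}\,(1-f(l_s^k))$. Since $I_{j+1}-I_j=\tfrac1b\int_{l_s^j-1}^{l_s^{j+1}-1}f(x)\,\mathrm dx\le\tfrac{p_{j+1}}{b}\le\tfrac{p_{j+1}}{b}$ and, more importantly, $I_{j+1}-I_j\le p_{j+1}$ since $f\le 1$, I first want to handle the trivial case where $s$ ends the algorithm without remaining capacity rarely matters, i.e.\ where $l_s^k$ is itself small; then $f(l_s^k)$ is bounded away from $1$ but the ``$1-f$'' term in $\hat y(r)$ is large and, using $p_{s,r}(1-f(l_s^r))\le p_{s,r}(1-f(l_s^k))$ only after noting $s$ never exhausts its capacity in that regime, one can close the bound directly.

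The main work is the case where $l_s^k$ may be large. Here I would split the sum $\sum_{j=0}^{k-1}(I_{j+1}-I_j)(1-\Pr[L_s\le j])$ at the threshold $b-\omega(\sqrt b)$: for indices $j$ with $l_s^j\le b-\omega(\sqrt b)$, Lemma~\ref{lem:probTo1} gives $\Pr[L_s\le j]\le\Pr[\text{$s$ has $\ge b$ matches among }r_1,\dots,r_j]\to 0$ uniformly, so those terms are $(1-o(1))(I_{j+1}-I_j)$; the remaining terms with $l_s^j>b-\omega(\sqrt b)$ contribute at most $\sum p_{j+1}$ over that range, which is $O(\sqrt b/b)=o(1)$ times \dots\ hmm, I need this bounded by $o(1)\cdot p_{s,r}$, so I would instead just drop those terms (they are nonnegative anyway) and also observe that the ``lost'' mass is bounded because the whole sum telescopes to $I_k\le\tfrac1b\int_{-1}^{b-1}f(x)\,\mathrm dx+O(1/b)$, which tends to $\tfrac1b\int_{-1}^{b-1}e^{x/b-1}\,\mathrm dx=1-e^{-1-1/b}\to 1-1/e$. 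Thus $c\cdot\EX_{Z_s}[\hat x(s)\mid Z_{-s}]\ge(1-o(1))\big(I_{k}-\sum_{j:\,l_s^j>b-\omega(\sqrt b)}(I_{j+1}-I_j)\big)$, and I must show $I_k$ minus that tail, plus the $\hat y(r)$ contribution, is at least $(1-o(1))\cdot c$; the natural choice is $\omega(\sqrt b)=b^{3/4}$ so that both the Chebyshev error in Lemma~\ref{lem:probTo1} and the tail mass $b^{3/4}/b$ vanish.

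The cleanest route, which I would actually take, is to case-split on $l_s^k$ directly: (i) if $l_s^k\ge b-b^{3/4}$, then by Lemma~\ref{lem:probTo1} $\Pr[L_s\le k]\to 0$ is \emph{not} what we want — rather I want that among the prefix reaching load $b-b^{3/4}$ the number of matches is $<b$ whp, hence $s$ is still available when those earlier $r_j$ arrive and $\hat x(s)$ picks up essentially the full integral $\tfrac1b\int_{-1}^{b-b^{3/4}-1}f\to 1-1/e$; since $c=1-1/e$ this already gives $p_{s,r}\hat x(s)\ge(1-o(1))p_{s,r}$. (ii) If $l_s^k<b-b^{3/4}$, then $s$ never runs out of capacity whp (Lemma~\ref{lem:probTo1} with the whole load $l_s^k$), so $\Pr[L_s\le k]\to 0$, the $\hat y(r)$ bound becomes $(1-o(1))\tfrac{p_{s,r}}{c}(1-f(l_s^k))$, and the $\hat x(s)$ bound becomes $(1-o(1))\tfrac1c I_k=(1-o(1))\tfrac1{cb}\int_{-1}^{l_s^k-1}e^{x/b-1}\,\mathrm dx=(1-o(1))\tfrac1c\big(e^{l_s^k/b-1}-e^{-1}\big)=(1-o(1))\tfrac1c\big(f(l_s^k)-e^{-1}\big)$. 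Adding, $p_{s,r}\hat x(s)+\hat y(r)\ge(1-o(1))\tfrac{p_{s,r}}{c}\big(f(l_s^k)-e^{-1}+1-f(l_s^k)\big)=(1-o(1))\tfrac{p_{s,r}}{c}(1-e^{-1})=(1-o(1))\,p_{s,r}$, using $c=1-1/e$. The one subtlety to get right — and the main obstacle — is the uniformity of the ``whp available'' statements across all the conditioning on $Z_{-s}$ and across all prefixes simultaneously; I would handle it by applying Lemma~\ref{lem:probTo1} to the single relevant load value in each case (the load at the last index below the threshold in case (i), and $l_s^k$ in case (ii)), so that only one Chebyshev bound is needed per case and the resulting $\varepsilon=\varepsilon(b)$ depends on $b$ alone, not on $Z_{-s}$ or on $k$. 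Collecting, $\varepsilon\to 0$ as $b\to\infty$, which is exactly the claim.
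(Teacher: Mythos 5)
Your final argument (the case split in your last paragraph) is essentially the paper's proof: combine Lemmas~\ref{lem:x(s)} and~\ref{lem:y(r)}, split on whether $l_s^k$ exceeds $b$ minus a threshold of order $\omega(\sqrt{b})$ (you use $b^{3/4}$, the paper uses $b^{2/3}$ --- both work), apply Lemma~\ref{lem:probTo1} to drive the availability probabilities to $1$, and evaluate $I_k + 1 - f(l_s^k)$ (resp.\ the truncated integral $I_{m+1}$) to obtain $1-1/e$ in the limit. The exploratory detours in your first two paragraphs are superseded by this route, which is correct and matches the paper.
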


\begin{proof}
    Combining Lemmas \ref{lem:x(s)} and \ref{lem:y(r)}, we obtain for any edge $\{s,r\}\in E$
\begin{multline*}
    \EX_{Z_s}\left[p_{s,r}\cdot \hat{x}(s) + y(r) \mid Z_{-s} \right] = \\ \frac{p_{s,r}}{c} \cdot \left(\sum_{j=0}^{k-1} \left(I_{j+1}-I_j\right) \cdot \left(1-\Pr\left[L_s  \leq j\right] \right)  + \left( 1-\Pr\left[L_s \leq k\right]\right) \cdot \left( 1- f\left(l_s^k \right) \right)\right) \,.
\end{multline*}
    Therefore, it suffices to show that
    \begin{equation}
    \label{equ:suffice}
        \sum_{j=0}^{k-1} \left(I_{j+1}-I_j\right) \cdot \left(1-\Pr\left[L_s  \leq j\right] \right)  + \left( 1-\Pr\left[L_s \leq k\right]\right) \cdot \left( 1- f\left(l_s^k \right) \right) \geq c\cdot(1-\varepsilon) \,.
    \end{equation}    
    Recall that $R_s = \{r_1, r_2, \ldots, r_k\}$ denotes the set of requests 
    assigned to $s$ if $s$ has no successful matches with $p_j := p_{s,r_j}$ and $l_s^j := \sum_{i=1}^j p_i$, for all $j\in [k]$.
    Moreover, $L_s$ is the index of the request in $R_s$ that results in the $b$-th match of $s$, 
    where we defined $L_s = k+1$ if there are less than $b$ matches in total. Observe that $1-\Pr\left[L_s \leq j\right]$ is equivalent to the probability that 
    there are less than $b$ matches when $s$ is assigned $\{r_1,r_2, \ldots, r_j\}$.
    
    We do a case distinction over $l_s^k$. If $l_s^k \leq b-\sqrt[3]{b^2}$,
    we obviously also have $l_s^j \leq b-b^{2/3}$, for all $j\in [k]$. By Lemma \ref{lem:probTo1}, it 
    then follows that $1-\Pr\left[L_s \leq j\right] \rightarrow 1$ as $b\rightarrow \infty$. Hence we can write $1-\Pr\left[L_s \leq j\right] \geq (1-\varepsilon')$ for a suitable $\varepsilon'$, which converges 
    to $0$ for large $b$. This lets us lower bound the LHS of (\ref{equ:suffice}) by 
    \[
    (1-\varepsilon') \cdot \left( \sum_{j=0}^{k-1} \left(I_{j+1}-I_j\right) +  1- f\left(l_s^k \right)  \right) = (1-\varepsilon') \cdot \left( I_k +   1- f\left(l_s^k \right) \right) \,.
    \]
    Plugging in the definition of $I_k$ and simplifying notation with $l:=l_s^k$, we finally obtain
    \begin{align*}
        (1-\varepsilon') \cdot \left( I_k +   1- f\left(l \right) \right) &= (1-\varepsilon')\cdot \left(\frac{1}{b} \cdot \int_{-1}^{l-1} e^{\frac{x}{b}-1} \ \mathrm{d} x + 1 - e^{\frac{l}{b}-1}  \right) \\ 
        &= (1-\varepsilon')\cdot \left(e^{\frac{l-1}{b}-1} - e^{-\frac{1}{b}-1} + 1 - e^{\frac{l}{b}-1} \right) \\
        &= (1-\varepsilon')\cdot \left(1- \frac{e^{-\frac{1}{b}}}{e} + e^{\frac{l}{b}-1} \cdot \left(e^{-\frac{1}{b}} - 1\right)\right)\,.
    \end{align*}
    Note that the expression in the right bracket converges to $1-1/e=c$ for $b\rightarrow \infty$. This implies that there exists a suitable $\varepsilon$ with $\varepsilon \rightarrow 0$ as $b\rightarrow \infty$, 
    such that (\ref{equ:suffice}) holds.
    
    On the other hand, if $l_s^k>b-\sqrt[3]{b^2}$, define $m$ as the largest index in $[k]$ with $l_s^m \leq b-\sqrt[3]{b^2}$. 
    Similar to the case 
    above, it holds that $1-\Pr\left[L_s \leq j\right] \rightarrow 1$ as $b\rightarrow \infty$, for all $j \in [m]$. Therefore, we 
 again lower bound these probabilities by $(1-\varepsilon')$, where $\varepsilon'$ converges to 0 for large $b$. We obtain the following 
    lower bound for the LHS of (\ref{equ:suffice})
    \[
    (1-\varepsilon')\cdot \left( \sum_{j=0}^{m} \left(I_{j+1}-I_j\right) + 0 \right) = (1-\varepsilon') \cdot I_{m+1} \,.
    \]
    Note that $l_s^{m+1} > b-\sqrt[3]{b^2}$. We simplify the notation by letting $l:=l_s^{m+1}$. We conclude
    \begin{align*}
        (1-\varepsilon') \cdot I_{m+1} &= (1-\varepsilon')\cdot \frac{1}{b} \cdot \int_{-1}^{l-1} e^{\frac{x}{b}-1} \ \mathrm{d} x  \\ 
        &> (1-\varepsilon') \cdot \frac{1}{b} \cdot \int_{0}^{b-\sqrt[3]{b^2}-1} e^{\frac{x}{b}-1} \ \mathrm{d} x  \\
        &= (1-\varepsilon') \cdot \left( e^{-b^{-1/3}-\frac{1}{b}}-\frac{1}{e}\right) \,.
    \end{align*}
    Analogous to before, the expression in the right bracket converges to $1-1/e=c$ for $b\rightarrow \infty$, implying 
    that there exists a suitable $\varepsilon$ with $\varepsilon \rightarrow 0$ as $b\rightarrow \infty$, 
    such that (\ref{equ:suffice}) holds.
\end{proof}

\begin{proof}[Proof of Theorem \ref{theo:SBalSOPT}]
    All that remains to show is that our constructed dual solution is feasible if $b\rightarrow \infty$. Consider any edge $\{s,r\}\in E$.
    Recall that $x(s):=\EX[\hat{x}(s)]$ and $y(r):=\EX[\hat{y}(r)]$. 
    Taking expectation over $Z_{-s}$ on both sides of Lemma \ref{lem:condSatisfiability} yields
 \[
 p_{s,r} \cdot x(s) + y(r)  \geq p_{s,r} \cdot (1-\varepsilon) \,, 
 \]
 by the tower property of conditional expectation. Hence our constructed dual solution is almost feasible.
 Recall that we constructed the solutions such that $\EX[\textsc{SBal}]=\EX[P]=c\cdot \EX[D]$. Moreover, we can obtain a feasible dual solution by
 dividing all dual variables by $(1-\varepsilon)$. The resulting dual solution has value $\EX[D'] = \EX[D]/(1-\varepsilon)$. Therefore, we have
 \[
 \EX[\textsc{SBal}] = \EX[P] = \left(1-\frac{1}{e} \right)\cdot \EX[D] = \left(1-\frac{1}{e} \right)\cdot(1-\varepsilon)\cdot \EX[D'] \geq \left(1-\frac{1}{e} \right)\cdot(1-\varepsilon)\cdot  \textsc{Opt} \,,
 \]
 where the inequality follows from weak duality. This implies a competitiveness of 
 \[
    \left(1-\frac{1}{e} \right)\cdot(1-\varepsilon) \xrightarrow{b\rightarrow\infty} \left(1-\frac{1}{e} \right) \,.
 \]
 Since $\textsc{SOpt} \leq \textsc{Opt}$, we immediately obtain the 
 same competitiveness against the stochastic benchmark.
\end{proof}

%%
%% Bibliography
%%

%% Please use bibtex, 
\bibliographystyle{plain}
\bibliography{main}

\appendix   

\section{Extensions of \textsc{StochasticBalance}}\label{app:ext}

In this appendix, we describe how the \textsc{StochasticBalance} algorithm and our primal-dual analysis 
can be extended to more general variants of the online $b$-matching problem with stochastic rewards. 
First, in Appendix \ref{app:variableb}, we consider the variation of the problem where every server $s \in S$ has an individual capacity $b_s$. Then, in Appendix \ref{app:weighted}, vertex weights 
are added on the offline side. In combination, the arguments establish Theorem~\ref{theo:ext}.
%Both of these extensions are standard for readers familiar with the primal-dual framework of 
%online matching problems. For the sake of completeness, we still include them here. 

\subsection{Extension for Variable Server Capacities} \label{app:variableb}

In this section, we sketch the changes to \textsc{StochasticBalance} and our primal-dual analysis that are necessary in order to handle 
individual server capacities. In this setting, each server $s\in S$ can be successfully matched up to $b_s$ times. 
For this, we define a function $f_s$ for each server $s$, where 
\[
    f_s(x) = \begin{cases}
    e^{\frac{x}{b_s}-1} &  x \leq b_s \,, \\
    1 &   x > b_s \,.
\end{cases}  
\]
Then, we change \textsc{StochasticBalance} such that it assigns an incoming request to the 
neighbor a remaining capacity that maximizes $p_{s,r}(1-f_s(l_s))$. 

We proceed with the primal-dual analysis the same way as before. The only 
change is that, whenever \textsc{StochasticBalance} assigns a request $r$ to 
a server $s$, we now increase $\hat{x}(s)$ and set $y(r)$ to 
\[
    \Delta \hat{x}(s) = p_{s,r}\cdot\frac{f_s(l_s)}{b_s\cdot c} \quad \text{and} \quad \hat{y}(r) = p_{s,r}\cdot \frac{1-f_s(l_s)}{c} \,,
\]
respectively. Note that the increase of $p_{s,r}$ in the primal 
assignment is still translated to a gain of $p_{s,r}/c$ in the dual assignment. 
The same arguments as before imply that we only have to show the feasibility of the constructed dual solution to show $c$-competitiveness.

Observe that –- except for Lemma \ref{lem:y(r)} –- we always considered a single server $s$ at the time throughout 
the proof of dual feasibility in Section \ref{sec:bal}.
Hence the analysis can essentially be copied verbatim, replacing $b$ with $b_s$ and $f$ with $f_s$. 
In the proof of Lemma \ref{lem:y(r)}, we argued that the dual variable $\hat{y}(r)$ of any neighbor $r$ of $s$ is lower bound by the value that 
$\hat{y}(r)$ would take on if $r$ was matched to $s$, if $s$ is available throughout the whole algorithm. The same line of reasoning 
still holds here, since \textsc{StochasticBalance} even so chooses 
a neighboring server for $r$ such that the dual variable $\hat{y}(r)$ is maximized.

Overall, we obtain the same competitiveness of $1-1/e$ if the capacity $b_s$ of each server $s\in S$ grows large.
This is equivalent to $b_{\min} := \min_{s\in S} b_s \rightarrow \infty$.

\subsection{Extension for Vertex Weights} \label{app:weighted}

At last, we consider the vertex-weighted extension of the problem. Here, each 
server $s\in S$ further has a weight $w_s$ assigned to it, indicating the 
weight of each edge incident to $s$. The goal is to maximize the expected 
total weight of successful matches. The primal and dual linear programs 
for this problem are given below. 

\begin{align*}
\textbf{P: }\text{max} \ &\sum_{\{s,r\}\in E}p_{s,r} \cdot w_s \cdot m(s,r) 
& \textbf{D: }\text{min} \ &\sum_{s\in S}  b_s \cdot x(s) + \sum_{r\in R} y(r) \\
\text{s.t.} \ &\sum_{r:\{s,r\}\in E} p_{s,r} \cdot m(s,r) \leq b_s, \ (\forall s \in S)
&\text{s.t.} \ & p_{s,r} \cdot x(s)+y(r) \geq p_{s,r} \cdot w_s , \ (\forall \{s,r\}\in E) \\
&\sum_{s:\{s,r\}\in E} m(s,r) \leq 1, \ (\forall r\in R)
& &x(s),\,y(r)\geq 0, \ (\forall s\in S, \forall r\in R)\, \\
& m(s,r) \geq 0, \ (\forall \{s,r\}\in E) \,
\end{align*}

We further adapt \textsc{StochasticBalance} such that it assigns an arriving 
request to a neighbor with remaining capacity that maximizes $w_s\cdot p_{s,r}(1-f_s(l_s))$. 
Note that the definition of the load $l_s$ of a server $s$ remains 
unchanged. 
In the primal-dual analysis, we still set $\hat{m}(s,r)=1$ whenever \textsc{StochasticBalance}
assigns a request $r$ to a server $s$. However, note that this now 
increases the value of the primal assignment by $w_s \cdot p_{s,r}$. Therefore, we can now increase $\hat{x}(s)$ and set $\hat{y}(r)$ to 
\[
    \Delta \hat{x}(s) = w_s \cdot p_{s,r}\cdot\frac{f_s(l_s)}{b_s\cdot c} \quad \text{and} \quad \hat{y}(r) = w_s\cdot p_{s,r}\cdot \frac{1-f_s(l_s)}{c} \,,
\]
respectively. Again, it only remains to prove the feasibility of the 
dual solution by showing
\begin{equation*}
    p_{s,r} \cdot x(s)+y(r) \geq p_{s,r} \cdot w_s , \ \forall \{s,r\}\in E \,.
\end{equation*}

Lemmas \ref{lem:IdenticalExecutions}, \ref{lem:AssignmentsOfS} and 
\ref{lem:probTo1} carry over without any changes. Moreover, it is easy to see 
that both the exact expression and lower bound of Lemma \ref{lem:x(s)Integral}
are now multiplied by $w_s$. It follows that the value of $\EX\left[\hat{x}(s) \mid Z_{-s} \right]$ from Lemma \ref{lem:x(s)}
is also multiplied by $w_s$. For Lemma \ref{lem:y(r)}, notice that 
\textsc{StochasticBalance} still assigns an incoming request $r$ such 
that $\hat{y}(r)$ is maximized. Thus, the same arguments as before can then 
be used to obtain the previous lower bound of $\EX\left[\hat{y}(r) \mid Z_{-s} \right]$ multiplied by $w_s$. 
Carrying on, one can see that Lemma \ref{lem:condSatisfiability} now shows 
\[
 \EX_{Z_s}\left[p_{s,r} \cdot \hat{x}(s) + \hat{y}(r) \mid Z_{-s} \right]\geq w_s \cdot p_{s,r} \cdot (1-\varepsilon) \,.
\]
Dual feasibility and thus Theorem \ref{theo:ext} then follow by taking expectation over $Z_{-s}$.

\end{document}